% SIAM Article Template
\documentclass[onefignum,onetabnum]{siamonline220329}

% Information that is shared between the article and the supplement
% (title and author information, macros, packages, etc.) goes into
% ex_shared.tex. If there is no supplement, this file can be included
% directly.

% SIAM Shared Information Template
% This is information that is shared between the main document and any
% supplement. If no supplement is required, then this information can
% be included directly in the main document.

% Packages and macros go here
\usepackage{lipsum}
\usepackage{amsfonts}
\usepackage{graphicx}
\usepackage{epstopdf}
\usepackage{algorithmic}
\ifpdf
  \DeclareGraphicsExtensions{.eps,.pdf,.png,.jpg}
\else
  \DeclareGraphicsExtensions{.eps}
\fi

% Prevent itemized lists from running into the left margin inside theorems and proofs
\usepackage{enumitem}
\setlist[enumerate]{leftmargin=.5in}
\setlist[itemize]{leftmargin=.5in}

% Add a serial/Oxford comma by default.

% Used for creating new theorem and remark environments
\newsiamremark{remark}{Remark}
\newsiamremark{hypothesis}{Hypothesis}
\crefname{hypothesis}{Hypothesis}{Hypotheses}
\newsiamthm{claim}{Claim}

% Sets running headers as well as PDF title and authors
\headers{Phylogenetic invariants under a Markov model}{M. Frohn, N. Holtgrefe, L. van Iersel, M. Jones, S. Kelk}

% Title. If the supplement option is on, then "Supplementary Material"
% is automatically inserted before the title.
\title{Invariants for level-1 phylogenetic networks under the random walk 4-state Markov model\thanks{Submitted to the editors \today.
\funding{This work was supported by grants OCENW.M.21.306 and OCENW.KLEIN.125
from the Dutch Research Council (NWO)}}}

% Authors: full names plus addresses.
\author{Martin Frohn\thanks{Department of Advanced Computing Sciences, Maastricht University, Maastricht, The Netherlands
  (\email{martin.frohn@maastrichtuniversity.nl}, \email{steven.kelk@maastrichtuniversity.nl}).}
\and Niels Holtgrefe\thanks{Institute of Applied Mathematics, Delft University of Technology, Delft, The Netherlands
  (\email{N.A.L.Holtgrefe@tudelft.nl}, \email{l.j.j.vaniersel@tudelft.nl}, \email{M.E.L.Jones@tudelft.nl}).}
\and Leo van Iersel\footnotemark[3]
\and Mark Jones\footnotemark[3]
\and Steven Kelk\footnotemark[2]}

\usepackage{amsopn}

%%% Local Variables: 
%%% mode:latex
%%% TeX-master: "ex_article"
%%% End: 

\newcommand{\leo}[1]{\textcolor{black}{#1}}
\usepackage{todonotes}

% Optional PDF information
\ifpdf
\hypersetup{
  pdftitle={Invariants for level-1 phylogenetic networks under the random walk 4-state Markov model},
  pdfauthor={M. Frohn, N. Holtgrefe, L. van Iersel, M. Jones, and S. Kelk}
}
\fi

% The next statement enables references to information in the
% supplement. See the xr-hyperref package for details.

%\externaldocument[][nocite]{ex_supplement}

% FundRef data to be entered by SIAM
%<funding-group specific-use="FundRef">
%<award-group>
%<funding-source>
%<named-content content-type="funder-name"> 
%</named-content> 
%<named-content content-type="funder-identifier"> 
%</named-content>
%</funding-source>
%<award-id> </award-id>
%</award-group>
%</funding-group>

\begin{document}

\maketitle

% REQUIRED
\begin{abstract}
Phylogenetic networks can represent evolutionary events that cannot be described by phylogenetic trees, such as hybridization, introgression, and lateral gene transfer. Studying phylogenetic networks under a statistical model of DNA sequence evolution can aid the inference of phylogenetic networks. Most notably Markov models like the Jukes-Cantor or Kimura-3 model can been employed to infer a phylogenetic network using phylogenetic invariants. In this article we determine all quadratic invariants for sunlet networks under the random walk 4-state Markov model, which includes the aforementioned models. Taking toric fiber products of trees and sunlet networks, we obtain a new class of invariants for level-1 phylogenetic networks under the same model. Furthermore, we apply our results to the identifiability problem of a network parameter. In particular, we prove that our new class of invariants of the studied model is not sufficient to derive identifiability of quarnets ($4$-leaf networks). Moreover, we provide an efficient method that is faster and more reliable than the state-of-the-art in finding a significant number of invariants for many level-1 phylogenetic networks.
\end{abstract}

% REQUIRED
\begin{keywords}
Phylogenetic network, Markov processes, Group-based model, Phylogenetic invariant, Algebraic variety, Identifiability
\end{keywords}

% REQUIRED
\begin{MSCcodes}
14Q20, 62F30, 92B10
\end{MSCcodes}

\section{Introduction}
The field of phylogenetics aims to infer evolutionary relationships between different organisms. To explain evolutionary phenomena a wide variety of mathematical models are studied depending on the available biological data and different assumptions on underlying evolutionary processes~\cite{Felsen04,Steel16,Steel03}. In this article we take an algebraic perspective of models in phylogenetics that was intiated in computational biology 37 years ago~\cite{CF87,Lake87}. Specifically, we will show how to find generators of the vanishing ideal of a network-based Markov model of DNA sequence evolution. Such generators are called \emph{phylogenetic invariants} and some of them were first discovered for the random walk 4-state Markov model by Evans and Speed~\cite{ES93} under the assumption that evolutionary histories follow a treelike process. In the phylogenetics literature, the four states of their Markov model are identified with letters of the alphabet $\Sigma=\{A,G,C,T\}$, representing the bases adenine (A), guanine (G), cytosine (C) and thymine (T) in the deoxyribonucleic acid (DNA) of an organism~\cite{Anfinsen59}. In this context, the random walk is taken on pairs of bases in the statespace~$\Sigma$ and not on the graph structure of a (tree-like) network. For example, Jukes and Cantor~\cite{JC69} introduced the first model for phylogenetic inference which fits into the class of models studied by Evans and Speed~\cite{ES93}. Moreover, random walk 4-state Markov models are statistical models with a rich history in phylogenetics that remains relevant today (see~\cite{K81,Yang14,Posada08}). Furthermore, the use of Fourier analysis in~\cite{ES93} to derive phylogenetic invariants inspired many others to derive further invariants for different tree-based Markov models~\cite{EZ98,FS95,Hage00}. These contributions lead to an article by Sturmfels and Sullivant~\cite{SS05} who showed that phylogenetic invariants of the tree-based random walk 4-state Markov model form a toric ideal in the Fourier coordinates. Moreover, the authors determined generators and Gr\"obner bases for these toric ideals. In this article we determine all quadratic phylogenetic invariants of the same model for a class of networks more general than trees which are known in the literature as level-1 phylogenetic networks~\cite{Kong22}. To this end, we will generalize the analysis of Cummings et al.~\cite{CHM24} who determined all quadratic phylogenetic invariants of sunlet networks under the random walk 2-state Markov model. \leo{We emphasize that the generalization from two to four states is highly nontrivial. The} %Their
binary model was introduced by Cavender, Farris and Neyman~\cite{Cavender78,Farris73,Neyman71} and assumes that mutations between both purines A and G, and pyrimidines C and T are much more common than any other substitutions in the DNA. From an algebraic perspective, this assumption significantly simplifies the study of phylogenetic invariants because a mutation between a purine and pyrimidine is symmetric and no other substitution event is present in the model. In our analysis we do not discard any interaction between the states in $\Sigma$. This generalization allows us to outline an efficient algorithm which can calculate a significant number of invariants for all level-1 phylogenetic networks under the random walk 4-state Markov model. Subsequently, we consider the question of identifiability of the network parameter under a statistical model. That is, can the network parameter be recovered from the observables of the model in a one-to-one mapping?  We prove that quadratic phylogenetic invariants are not sufficient to derive identifiability \leo{of level-$1$ quarnets ($4$-leaf networks). Therefore, identifiability of general level-$1$ networks can not be shown using quarnets as done by Gross et al.~\cite{GIJ21} with higher degree invariants.}
%of tree-child  networks from quarnets as done by Gross et al.~\cite{GIJ21} for level-1 networks with higher degree invariants.
The state-of-the-art algorithm \leo{that Gross et al.~\cite{GIJ21}} used to certify identifiability under the random walk 4-state Markov model is randomized, relies on a matroid separation routine which can fail to prove identifiability for a given phylogenetic network and requires the use of a generic Gröbner basis calculation method. For a detailed description of the state-of-the-art see~\cite{HS21}. Hence, even though our algorithm calculates phylogenetic invariants of limited degree, it does not face the same methodological constraints and limitations as the state-of-the-art and speeds up the calculation of invariants for many level-1 phylogenetic networks.

The article is structured as follows: in Section~\ref{sec2} we define and detail the algebraic model we study in this article. For this model we determine all quadratic invariants of the sunlet network in Section~\ref{sec3}. In Section~\ref{sec4} we summarize our findings and discuss the implications of our results as well as future research directions.

\section{Notation and background}\label{sec2}
In this section we introduce the concepts from graph theory, phylogenetics and algebra which we make use of in this article.
\subsection{Phylogenetic networks}
For a natural number $n\geq 2$, let $\Gamma=\{1,2,\dots,n\}$ represent a set of distinct (evolving) objects, called \emph{taxa}. We call a connected graph $G$ \emph{rooted} if $G$ is directed and contains a unique vertex $\rho$ having in-degree zero and there exists a directed path from $\rho$ to any vertex of in-degree 1 and out-degree 0 of $G$. We call $\rho$ the \emph{root} of $G$. A vertex of $G$ with out-degree zero and in-degree one is called a \emph{leaf} of $G$.
\begin{definition}
A \emph{(binary) phylogenetic network} of $\Gamma$ is a rooted acyclic directed graph with no edges in parallel such that
\begin{enumerate}
\item the root has out-degree two;
\item the leaves are in bijection with the set $\Gamma$;
\item all other vertices, called \emph{internal vertices}, have either in-degree one and out-degree two, or in-degree two and out-degree one.
\end{enumerate}
\end{definition}
\begin{figure}[t]
  \centering
  \includegraphics[scale=0.5]{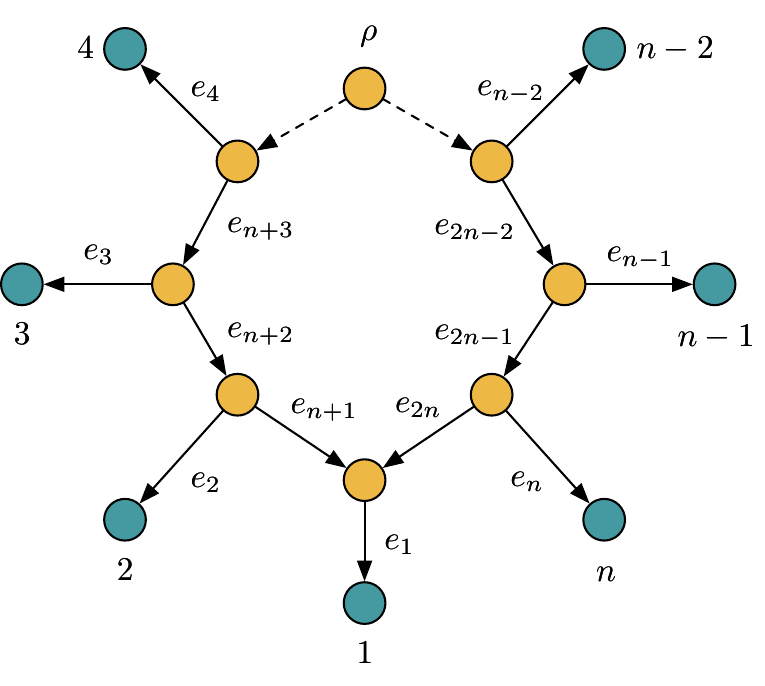}~~~~~~~~
  \includegraphics[scale=0.5]{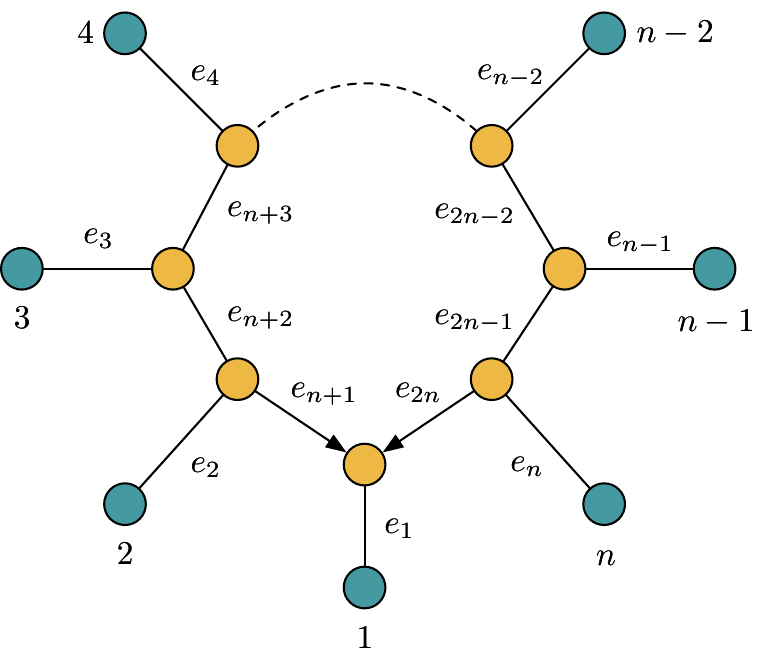}
  \caption{The cycle and sunlet network of $\Gamma$ with edgeset $E=\{e_1,\dots,e_{2n}\}$ on the left and right, respectively.}
  \label{fig:sunlet}
\end{figure}
To simplify our notation we identify the elements of $\Gamma$ with the leaves of the network. Vertices with in-degree one and out-degree two are called \emph{tree vertices} and vertices with in-degree two and out-degree one are called \emph{reticulation vertices}. An edge $(u,v)$ for which $v$ is a reticulation vertex is called a \emph{reticulation edge}. There are some classes of phylogenetic networks that are of particular interest to us:
\begin{definition}
Let $\mathcal{N}$ be a binary phylogenetic network of $\Gamma$. Then, $\mathcal{N}$ is called a
\begin{enumerate}
\item \emph{phylogenetic tree} if it contains no reticulation vertices.
\item \emph{cycle network} if it contains exactly one reticulation vertex and each internal vertex has exactly one leaf as its child.
\item \emph{level-$k$ network} if there exists a maximum of $k$ reticulation vertices in each biconnected component of $\mathcal{N}$.
\item \emph{tree-child network} if every internal vertex has at least one child vertex that is either a tree vertex or a leaf.
\end{enumerate}
\end{definition}
An example of a cycle network is shown in Figure~\ref{fig:sunlet} on the left. Notice that level-1 networks are a subclass of tree-child networks~\cite{Kong22}. Moreover, observe that any level-1 phylogenetic network can be constructed by glueing together cycle networks and phylogenetic trees. 

This article focuses on time-reversible models. This means, we are able to identify reticulation edges but not the location of the root in a phylogenetic network $\mathcal{N}$. Hence, we will mainly be interested in the \emph{semi-directed} network we obtain from $\mathcal{N}$ by suppressing the root and removing the orientation of all tree edges.
\begin{definition}
Let $\mathcal{N}$ be a cycle network of $\Gamma$. Then, we call the semi-directed network of $\mathcal{N}$ the \emph{sunlet network} $S(\mathcal{N})$ of $\Gamma$.
\end{definition}
To simplify our notation we assume that the leaves of the sunlet network $S(\mathcal{N})$ of $\Gamma$ are labelled clockwise by increasing elements in $\Gamma$, starting in the child of the reticulation vertex in $\mathcal{N}$. Furthermore, the edgeset of $S(\mathcal{N})$ is ordered by clockwise enumerating edges which are incident to leaves first and the remaining edges second as shown on the right in Figure~\ref{fig:sunlet}.

\subsection{Algebraic varieties of network-based Markov models}
In this section we equip the edges of a phylogenetic network with transition matrices that follow a random walk Markov model. This allows us to study phylogenetic networks under a model of DNA sequence evolution for a finite set of input data. Subsequently, we consider the algebraic varieties which arise naturally from the study of these models. For an introduction to this algebraic approach from a phylogenetics perspective we refer the reader to~\cite{PS05}.

Given a tree-child phylogenetic network $\mathcal{N}=(V,E)$ of $\Gamma$ and an integer $\kappa\geq 2$, associate a random variable $X_v$ with state space $S_{\kappa}=\{s_1,\dots,s_\kappa\}$ to each vertex $v\in V$. Furthermore, associate a $\kappa\times\kappa$ transition matrix $M^e$ with each edge $e=(u,v)\in E$ such that $$M^e_{ij}=\,\text{Pr}[X_v=j\,|\,X_u=i].$$ In addition, associate a root distribution $\pi$ to the root $\rho$ of $\mathcal{N}$. Let $\mathcal{A}$ denote the set of all possible assignments $A_0:\Gamma\to S_\kappa$ and, for $A_0\in\mathcal{A}$, let $\mathcal{A}(\mathcal{N},A_0)$ denote the set of all possible assignments $A:V\to S_\kappa$ such that $A(v)=A_0(v)$ for all $v\in\Gamma$. The assignments $A_0$ are commonly referred to as characters in the phylogenetics literature. However, in this article we reserve this description for the characters of finite abelian groups. Finally, for the number of reticulation vertices $k$, let $\{\mathcal{T}_1,\dots,\mathcal{T}_{2^k}\}$ denote the set of pairwise distinct phylogenetic trees we obtain from $\mathcal{N}$ by deleting exactly one edge in each pair of adjacent reticulation edges and removing edge subdivisions/vertices with in-degree and out-degree one. Then, for $j\in\{1,\dots,2^k\}$, the joint distribution on the states of the leaves of $T_j$ is given by the probabilities
\begin{align*}
p_{j}(A_0)&:=\text{Pr}\left[(X_v)_{v\in\Gamma}=(A_0(v))_{v\in\Gamma}\right]\\
&\,=\sum_{A\in\mathcal{A}\left(\mathcal{T}_j,A_0\right)}\pi_{A(\rho)}\prod_{e=(u,v)\in E}M^e_{A(u)A(v)}~&~&\forall\,A_0\in\mathcal{A}.
\end{align*}
For the calculation of marginal distributions
\begin{align*}
P_{\mathcal{T}_j}&=\left(p_{j}(A_0)\right)_{A_0\in\mathcal{A}}~&~&\forall\,j\in\{1,\dots,2^k\}
\end{align*}
we assume that transition matrices $M^e$ follow a random walk $\kappa$-state Markov model on the edges $E$. This means, the probabilities $M_{s_is_j}^e$ only depend on the difference of states $s_i$ and $s_j$ for all $e\in E$, $s_i,s_j\in S_{\kappa}$. Hence, we call the model defining distribution $P_{\mathcal{T}_j}$ the \emph{phylogenetic Markov model} on $\mathcal{T}_j$. Let $\Theta_{\mathcal{N}}$ denote the space of transition matrices $M^e$ and root distributions $\pi$, called the \emph{stochastic parameter space}, and let $\Delta_{d-1}=\left\{p\in\mathbb{R}_{\geq 0}^n\,:\,\sum_{i=1}^np_i=1\right\}$ denote the $d$-th dimensional probability simplex. Then, for edges $E=\{e_1,\dots,e_m\}$, we can view the phylogenetic Markov model on $\mathcal{T}_j$ as the image of the polynomial map
\begin{align*}
\phi_{\mathcal{T}_j}:~~~~~~~~~~\Theta_{\mathcal{N}}~~~~~~~~~~&\to\Delta_{\kappa^n-1}\\
(\pi,M^{e_1},\dots,M^{e_m})&\mapsto P_{\mathcal{T}_j}.
\end{align*}
Now, for $j\in\{1,\dots,2^k\}$, consider that we obtain $\mathcal{T}_j$ from $\mathcal{N}$ by independently per reticulation vertex deleting edges $e_i\in E$ with probability $\lambda_i$ and removing edge subdivisions. If $e_i$ is not a reticulation edge, then $\lambda_i=0$. If $(u_k,v_k)$ and $(u_l,v_l)$ are a pair of adjacent reticulation edges, then $v_k=v_l$ because $\mathcal{N}$ is tree-child and therefore $\lambda_k=1-\lambda_l$. Let $\sigma(\mathcal{T}_j)\in\{0,1\}^m$ denote the indicator variable for the edges from $E$ present in $\mathcal{T}_j$. Then, analogously to the maps $\phi_{\mathcal{T}_j}$, we define the phylogenetic Markov model on $\mathcal{N}$ as the image of the polynomial map
\begin{align*}
\phi_{\mathcal{N}}:~~~~~~~~~~\Theta_{\mathcal{N}}~~~~~~~~~~&\to\Delta_{\kappa^n-1}\\
(\pi,M^{e_1},\dots,M^{e_m})&\mapsto \sum_{j=1}^{2^k}\left(\prod_{i=1}^m\lambda_i^{1-\sigma(\mathcal{T}_j)_i}(1-\lambda_i)^{\sigma(\mathcal{T}_j)_i}\right)P_{\mathcal{T}_j}.
\end{align*}
The image of $\phi_{\mathcal{N}}$ is a marginal distribution 
\begin{align*}
P_{\mathcal{N}}=\left(p(A_0)\right)_{A_0\in\mathcal{A}}
\end{align*}
on the leaves of $\mathcal{N}$. Hence, we call a polynomial in the indeterminates $p(A_0)$, $A_0\in\mathcal{A}$, which evaluates to zero for parameters from $\Theta_{\mathcal{N}}$ a \emph{phylogenetic invariant} of the phylogenetic Markov model on $\mathcal{N}$.

Let $\mathbb{C}[p_{i_1\dots i_n}]$ denote the polynomial ring over the indeterminates $p(A_0)$, $A_0\in\mathcal{A}$ with $i_k=A_0(k)$, $k\in\Gamma$. For $j\in\{1,\dots,2^k\}$, it is easy to see that the set of phylogenetic invariants $I_{\mathcal{T}_j}$ of the phylogenetic Markov model on $\mathcal{T}_j$ is a prime ideal in $\mathbb{C}[p_{i_1\dots i_n}]$. Indeed, for $f,g\in\mathbb{C}[p_{i_1\dots i_n}]$ such that $fg\in I_{\mathcal{T}_j}$, we know that $fg$ evaluates to zero for parameters from $\Theta_{\mathcal{N}}$, i.e., $f\in I_{\mathcal{T}_j}$ or $g\in I_{\mathcal{T}_j}$. If both do, then $fg=f+g$, which contradicts the fact that $f$ and $g$ are polynomials. We call the set of phylogenetic invariants $I_{\mathcal{N}}$ of the phylogenetic Markov model on $\mathcal{N}$ the \emph{phylogenetic ideal} of $\mathcal{N}$. Since $I_{\mathcal{N}}$ is the sum of prime ideals, it is not prime. To simplify the analysis, we do not take the structure of the stochastic parameter space $\Theta_{\mathcal{N}}$ into account when studying the phylogenetic ideal. Hence, we extend $\phi_{\mathcal{N}}$ to be a complex polynomial map. We call the variety
\begin{align*}
V_{\mathcal{N}}=\{p\in\mathbb{C}^{\kappa^n}\,:\,f(p)=0~\forall\,f\in I_{\mathcal{N}}\}
\end{align*}
the \emph{phylogenetic variety} of $\mathcal{N}$. Observe that $V_{\mathcal{N}}$ is the Zariski closure of the image of $\phi_{\mathcal{N}}$. Assume for the rest of the article that the phylogenetic Markov model is a random walk 4-state Markov model.

\subsection{Fourier analysis of the phylogenetic variety}
In this subsection we look into the structure of the random walk 4-state Markov model to parameterize the phylogenetic variety of tree-child phylogenetic networks $\mathcal{N}$. We assume that this model is time-reversible, i.e., the location of the root under the phylogenetic Markov model cannot be identified. This means, the root distribution is uniform and the phylogenetic ideal of $\mathcal{N}$ and the ideal of $S(\mathcal{N})$ coincide. %This will help us in the next section to determine phylogenetic invariants on level-1 phylogenetic networks.

First, consider the set $\mathbb{G}=\mathbb{Z}_2\oplus\mathbb{Z}_2=\{(0,0),(0,1),(1,0),(1,1)\}$ with a group operation defined by coordinate-wise addition modulo 2. The resulting group $(\mathbb{G},+)$ is called the \emph{Klein four-group}. Then, there exists an isomorphism between the state space $S_4=\{s_1,s_2,s_3,s_4\}$ of our Markov model and $\mathbb{G}$:
\begin{align*}
s_1\leftrightarrow(0,0),~~s_2\leftrightarrow(0,1),~~s_3\leftrightarrow(1,0),~~s_4\leftrightarrow(1,1).
\end{align*}
Recall that $S_4=\Sigma =\{A,G,C,T\}$ is the common identification in the phylogenetics literature. Since the transition matrices $M^e$ follow a random walk, we can apply this isomorphism to state that there exist functions $f^e:\mathbb{G}\to\mathbb{R}$ such that $M_{gh}^e=f^e(g-h)$ for all $e\in E$, $g,h\in\mathbb{G}$. A phylogenetic Markov model which is identified in this way with a group is called a \emph{group-based} model in the phylogenetics literature. Hence, for $j\in\{1,\dots,2^k\}$, we can rewrite the elements of the marginal distribution $P_{\mathcal{T}_j}$ as follows:
\begin{align}\label{p::group}
p_{j,g_1\cdots g_n}&=\sum_{g_v\in\mathbb{G},v\in V\setminus\Gamma}\pi(g_\rho)\prod_{u\,:\,e=(u,v)\in E(\mathcal{T}_j)}f^e(g_u-g_v)~&~&\forall\,g\in\mathbb{G}^n.
\end{align}
The dependence of the transition probabilities on differences of group elements is reminiscent of the convolution operation in Fourier analysis. Indeed, as Sturmfels and Sullivant~\cite{SS05} showed, a Fourier transform of the probabilities $p_{j,g_1\cdots g_n}$ will turn the prime ideals $I_{\mathcal{T}_j}$, $j\in\{1,\dots,2^k\}$, into toric ideals, i.e., ideals which are generated by the differences of monomials. This will help us in the calculation of the phylogenetic variety~$V_{\mathcal{N}}$. To this end, we recall some more definition for finite abelian groups, specifically applied to the Klein four-group: the set $\mathbb{S}^1=\{z\in\mathbb{C}\,:\,|z|=1\}$ with the group operation defined by complex multiplication is called the \emph{circle group}. Then, any map $\chi :\mathbb{G}\to\mathbb{S}^1$ with $\chi(g+h)=\chi(g)\chi(h)$ for all $g,h\in\mathbb{F}$ is called a \emph{character} of $\mathbb{G}$. Let $\hat{\mathbb{G}}$ denote the set of characters of $\mathbb{G}$. Then, $\hat{\mathbb{G}}$ with the group operation defined by the pointwise multiplication of functions is called the \emph{dual group} of $\mathbb{G}$. Let $f:\mathbb{G}\to\mathbb{C}$ be a function. Then, the function $\hat{f}:\hat{\mathbb{G}}\to\mathbb{C}$ defined by
\begin{align*}
\hat{f}(\chi)=\sum_{g\in\mathbb{G}}\chi(g)f(g)
\end{align*}
is called the \emph{Fourier transform} of $f$. %In addition, given another function $f':\mathbb{G}\to\mathbb{C}$, the \emph{convolution} of $f$ and $f'$ evaluated at $g\in\mathbb{G}$ is given by
%\begin{align*}
%(f*f')(g)=\sum_{h\in\mathbb{G}}f(g-h)f'(h).
%\end{align*}
Before we consider a Fourier parameterization of $P_{\mathcal{N}}$, we provide some intuition for our formulas by illustrating the Fourier analysis done by Evans and Speed~\cite{ES93}: let $\mathcal{T}_j$ be the tree consisting of only a root, $n$ leaves and edges between the root and leaves. Then, for $g\in\mathbb{G}^n$, $p$-coordinates~\eqref{p::group} can be written as
\begin{align*}
p_{g}&=\sum_{h\in\mathbb{G}}\pi(h)\prod_{i=1}^nf^{(\rho,i)}(h-g_i).
\end{align*}
Define function $\mu:\mathbb{G}^n\to\mathbb{R}$ by $\mu(g)=\pi(g_1)$ if $g_1=\dots =g_n$ and $\mu(g)=0$ otherwise, to obtain
\begin{align*}
p_{g}&=\sum_{h\in\mathbb{G}^n}\mu(h)\prod_{i=1}^nf^{(\rho,i)}(h-g_i).
\end{align*}
Now, $p_g$ can be seen as the convolution of functions $\mu$ and $\prod_{i=1}^nf^{\rho,i}$ on the group $\mathbb{G}^n$. Then, applying identities from Fourier analysis and the isomorphism between the Klein four-group and its dual group we arrive at Fourier coordinates
%\begin{align*}
%\hat{p}(\chi)&=\hat{\mu}\left(\prod_{i=1}^n\chi_i\right)\prod_{i=1}^n\widehat{f^{(\rho,i)}}(\chi_i)~&~&\forall\,\chi\in\hat{\mathbb{G}}^n
%\end{align*}
\begin{align*}
q_g&=\hat{\mu}\left(\sum_{i=1}^ng_i\right)\prod_{i=1}^n\widehat{f^{(\rho,i)}}(g_i)~&~&\forall\,g\in\mathbb{G}^n.
\end{align*}
We call the various Fourier transforms on the righthand side the \emph{Fourier parameters} of $\mathcal{T}_j$ and denote them by $a_g^0=\hat{\mu}(g)$ and $a_g^i=\widehat{f^{(\rho,i)}}(g)$ for suitable $g\in\mathbb{G}$. Since we assume that the root distribution $\pi$ is uniform, the Fourier transform $\hat{\mu}$ vanishes if not evaluated at the identity. Therefore, $q_g=0$ for all $g\in\mathbb{G}^n$ with $\sum_{i=1}^ng_i=0$. Similar arguments and definitions can be made for phylogenetic networks. Then, the general parameterization of the sunlet network $S(\mathcal{N})$ (using the enumeration from Figure~\ref{fig:sunlet})~\cite{CHM24} is given by
\begin{align*}
\psi_{\mathcal{N}}:\mathbb{C}\left[q_g\,:\,g\in\mathbb{G}^n,~\sum_{i=1}^ng_i=0\right]&\to\mathbb{C}\left[a_g^i\,:\,g\in\mathbb{G},~i\in\{1,\dots,2n\}\right]\\
q_g&\mapsto\prod_{j=1}^na_{g_j}^j\left(\prod_{j=1}^{n-1}a_{\sum_{l=1}^jg_l}^{n+j}+\prod_{j=2}^na_{\sum_{l=2}^jg_l}^{n+j}\right)
\end{align*}
The kernel of $\psi_{\mathcal{N}}$ is the phylogenetic ideal $I_{\mathcal{N}}$ in Fourier coordinates.

\section{Quadratic invariants for sunlet networks}\label{sec3}
In this section we calculate all quadratic invariants for sunlet networks under the random walk 4-state Markov model. Cummings et al.~\cite{CHM24} proved that the phylogenetic ideal of a level-1 phylogenetic network can be constructed by taking \emph{toric fiber products}~\cite{Sull07} of phylogenetic ideals of trees and cycle/sunlet networks which the level-1 network decomposes into. Since Sturmfels and Sullivant~\cite{SS05} provide a Gr\"obner basis for the phylogenetic invariants of trees, we focus our attention here on sunlet networks, i.e., we assume $\mathcal{N}$ is a cycle network.

First, we generalize the notion of a glove that Cummings et al.~\cite{CHM24} introduced for the binary Markov model to partition the set of quadratic phylogenetic invariants in $I_{\mathcal{N}}$:
\begin{definition}
Let $F_1,F_2\subseteq\Gamma$, $b^1\in(\mathbb{Z}/2\mathbb{Z})^{F_1}$, $b^2\in(\mathbb{Z}/2\mathbb{Z})^{F_2}$, $b^3\in(\mathbb{Z}/2\mathbb{Z})^{\Gamma\setminus(F_1\cup F_{2})}$. Denote $F=\{F_1,F_2\}$ and $b=\{b^1,b^2,b^3\}$. Then, we call the complex vector space
\begin{align*}
\mathcal{G}(n,F,b)=\,\text{span}(q_Gq_H\,&:\,G,H\in(\mathbb{Z}/2\mathbb{Z})^{2\times n},~\sum_{j=1}^nG_{ij}=\sum_{j=1}^nH_{ij}=0~\forall\,i\in\{1,2\},\\
&~~~G_{ij}=H_{ij}=b_j^i~~~\forall\,i\in\{1,2\},j\in F_{i},\\
&~~~G_{ij}+H_{ij}=1~~~~\forall\,i\in\{1,2\},j\in\Gamma\setminus F_{i},\\
&~~~G_{1j}+G_{2j}=b_j^{3}~~\,\forall\,j\in\Gamma\setminus(F_1\cup F_{2}))
\end{align*}
the \emph{2-glove} of $F$ and $b$.
\end{definition}
Observe that every quadratic phylogenetic invariant is contained in $I_{\mathcal{N}}\cap\mathcal{G}(n,F,b)$ for some valid choice of $F$ and $b$. However, we will see that not every 2-glove contains contains a phylogenetic invariant of $\mathcal{N}$. For $F_3\subseteq\Gamma$, we write $\mathcal{G}(n-|F_3|,F,b)$ to denote the 2-glove we obtain from $\mathcal{G}(n,F,b)$ by removing columns of $G,H\in(\mathbb{Z}/2\mathbb{Z})^{2\times n}$ indexed by $F_3$ and restricting $F$ and $b$ accordingly. We assume throughout the rest of the article that all 2-gloves satisfy $F_1\cap F_2 =\emptyset$ if not stated otherwise. This assumption does not restrict $I_{\mathcal{N}}\cap\mathcal{G}(n,F,b)$ as the following proposition shows:

\begin{proposition}\label{prop::Gdim}
Let $F_1,F_2\subseteq\Gamma$, $b^1\in(\mathbb{Z}/2\mathbb{Z})^{F_1}$, $b^2\in(\mathbb{Z}/2\mathbb{Z})^{F_2}$, $b^3\in(\mathbb{Z}/2\mathbb{Z})^{\Gamma\setminus(F_1\cup F_{2})}$. Then, for $F_3=F_1\cap F_2$ with $|F_3|\geq n-1$, we have $I_{\mathcal{N}}\cap\mathcal{G}(n,F,b)\cong I_{\mathcal{N}}\cap\mathcal{G}(n-|F_3|,F,b)$.
\end{proposition}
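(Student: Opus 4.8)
The plan is to push on the strong hypothesis $|F_3|=|F_1\cap F_2|\ge n-1$: it forces at most one column of the matrices $G,H\in(\mathbb{Z}/2\mathbb{Z})^{2\times n}$ indexing the 2-glove to be anything other than completely determined, and I would split into the cases $|F_3|=n-1$ and $|F_3|=n$. In the first case I would show that $\mathcal{G}(n,F,b)=\{0\}$ outright: writing $\{j_0\}=\Gamma\setminus F_3$, we have $F_3=\Gamma\setminus\{j_0\}\subseteq F_i$ for each $i\in\{1,2\}$, and since $j_0\notin F_1\cap F_2$ there is some $i$ with $j_0\notin F_i$, which (as $F_3\subseteq F_i$ and $|F_3|=n-1$) forces $F_i=F_3$. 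For any admissible pair $(G,H)$ the constraints $G_{ij}=H_{ij}=b^i_j$ for $j\in F_i=\Gamma\setminus\{j_0\}$, together with $\sum_{j=1}^nG_{ij}=\sum_{j=1}^nH_{ij}=0$, force $G_{ij_0}=\sum_{j\in F_3}b^i_j=H_{ij_0}$, contradicting the required $G_{ij_0}+H_{ij_0}=1$. Hence there is no admissible pair, $\mathcal{G}(n,F,b)=\{0\}$, the column-deleted glove is $\{0\}$ as well, and the isomorphism is immediate.

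Next I would handle $|F_3|=n$, that is, $F_1=F_2=\Gamma$ and $b^3$ empty. Here every admissible pair satisfies $G=H=B$ with $B_{ij}=b^i_j$, so $\mathcal{G}(n,F,b)$ is $\{0\}$ unless $\sum_{j=1}^nb^i_j=0$ for $i=1,2$, in which case $\mathcal{G}(n,F,b)=\text{span}(q_B^2)$; correspondingly the column-deleted glove is $\{0\}$ or $\text{span}(q_\emptyset^2)$, where $q_\emptyset$ is the lone Fourier coordinate on the empty leaf set. Thus in every situation allowed by the hypothesis both gloves are at most one-dimensional, and it remains only to check that neither $q_B^2$ nor $q_\emptyset^2$ lies in $I_{\mathcal{N}}$. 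For that I would use that $\psi_{\mathcal{N}}$ is an algebra homomorphism into the polynomial ring $\mathbb{C}[a^i_g]$, which is an integral domain, so $q^2\in I_{\mathcal{N}}=\ker\psi_{\mathcal{N}}$ iff $\psi_{\mathcal{N}}(q)=0$. Writing $g\in\mathbb{G}^n$ for the tuple encoded by $B$, the parameterization gives $\psi_{\mathcal{N}}(q_B)=\left(\prod_{j=1}^na^j_{g_j}\right)(m_1+m_2)$ with $m_1=\prod_{j=1}^{n-1}a^{n+j}_{\sum_{l=1}^jg_l}$ and $m_2=\prod_{j=2}^na^{n+j}_{\sum_{l=2}^jg_l}$; the prefactor is a nonzero monomial and $m_1\ne m_2$ (for instance $m_2$ has a factor with superscript $2n$ while $m_1$ does not), so $\psi_{\mathcal{N}}(q_B)\ne 0$, and likewise $q_\emptyset$ has a nonzero constant as image. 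Hence $I_{\mathcal{N}}\cap\mathcal{G}(n,F,b)=\{0\}=I_{\mathcal{N}}\cap\mathcal{G}(n-|F_3|,F,b)$, and the two are isomorphic.

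The step I expect to be the real obstacle is the combinatorial bookkeeping in the first case — extracting from $F_3=F_1\cap F_2$ and $|F_3|=n-1$ that one of $F_1,F_2$ equals $F_3$, and then seeing that the row-sum constraint collides with $G_{ij_0}+H_{ij_0}=1$ on the single undetermined column. A smaller loose end is fixing the intended reading of the ambient coordinate ring and of $I_{\mathcal{N}}$ in the degenerate range $n-|F_3|\le 1$, where no genuine sunlet survives; but since every admissible Fourier coordinate there has nonzero image, the choice of convention does not change the conclusion. For context I would also note that, more generally, deleting any block of jointly frozen columns is expected to give an isomorphism of gloves compatible with $\psi_{\mathcal{N}}$ — the frozen columns contributing a common factor and merely relabeling the partial-sum indices of the surviving internal-edge parameters — but under the present hypothesis the direct argument above is shorter.
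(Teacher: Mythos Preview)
Your case analysis is sound for the hypothesis $|F_3|\ge n-1$ as literally printed, but that inequality is almost certainly a misprint for $|F_3|\le n-1$ (or for no size constraint at all). The proposition is announced as the justification for assuming $F_1\cap F_2=\emptyset$ throughout, and the summary section invokes it to project an \emph{arbitrary} 2-glove down to one on $n-|F_3|$ leaves; a statement covering only $|F_3|\ge n-1$, where you correctly show both intersections collapse to $\{0\}$, would justify nothing of the sort.

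The paper's own proof bears this out: it never uses $|F_3|\ge n-1$. It builds the inverse of column deletion by lifting $q_Gq_H\in\mathcal{G}(n-|F_3|,F,b)$ to $q_{G'}q_{H'}\in\mathcal{G}(n,F,b)$, reinserting the frozen columns with entries $b^i_j$ and, when the frozen block has odd sum in row $i$, flipping $G'_{ij}$ at the largest surviving index $j\in\Gamma\setminus F_3$ to restore the zero row-sum. That adjustment step requires $\Gamma\setminus F_3\neq\emptyset$, i.e.\ $|F_3|\le n-1$. This is exactly the general ``frozen columns factor out and column deletion is an isomorphism compatible with $\psi_{\mathcal{N}}$'' that you mention only as a closing contextual remark; in the paper that \emph{is} the argument, while your direct trivialisation under $|F_3|\ge n-1$ bypasses it entirely.

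One minor wobble in your write-up: in the case $|F_3|=n$ the restricted glove on zero leaves is always one-dimensional, since the $b$-data on $F_3$ is discarded upon restriction; it is not ``correspondingly $\{0\}$ or $\mathrm{span}(q_\emptyset^2)$'' according to whether $\sum_jb^i_j$ vanishes. Your conclusion still goes through because you then argue $q_\emptyset^2\notin I_{\mathcal{N}}$ regardless, but the parallelism you suggest between the two gloves is not actually present there.
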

\begin{proof}
First, consider $q_Gq_H\in\mathcal{G}(n-|F_3|,F,b)$. Let $G'\in(\mathbb{Z}/2\mathbb{Z})^{2\times n}$ be defined by
\begin{align*}
G_{ij}'&=\begin{cases}
G_{ij} &\text{if }j\in\Gamma\setminus F_3,\\
b_j^i &\text{otherwise}
\end{cases}~&~&\forall\,i\in\{1,2\}.
\end{align*}
Define $H'\in(\mathbb{Z}/2\mathbb{Z})^{2\times n}$ analogously. If $\sum_{j\in F_3}b_j^i=0$ for $i\in\{1,2\}$, then $q_{G'}q_{H'}\in\mathcal{G}(n,F,b)$. Otherwise, there exists $i\in\{1,2\}$ such that $\sum_{j\in F_3}b_j^i=1$. In this case, add one to $G_{ij}'$ for $j\in\Gamma\setminus F_3$ maximum.
\end{proof}
Now, we can make some further observations that will help us to characterize the phylogenetic invariants in each 2-glove:
\begin{figure}[t]
  \centering
  \includegraphics[scale=0.5]{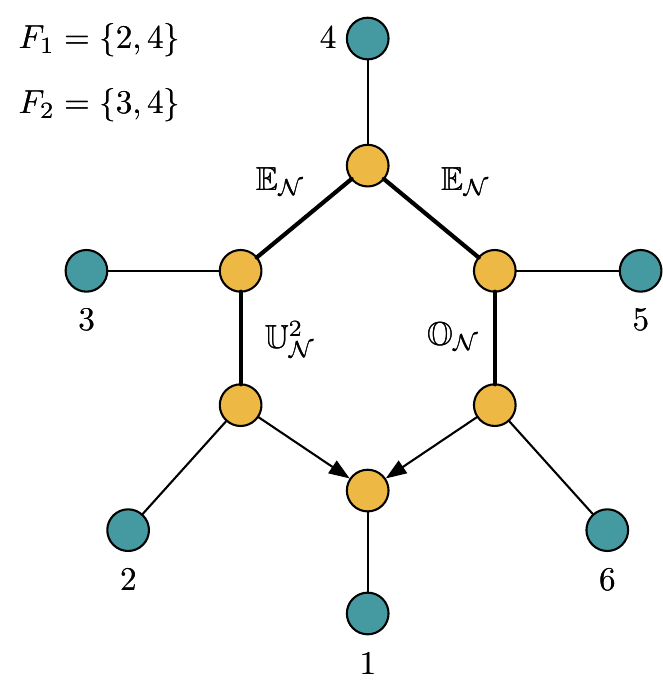}
  \caption{The sunlet network of $\Gamma=\{1,\dots,6\}$. The partition of the bold edges into sets $\mathbb{E}_{\mathcal{N}}$, $\mathbb{O}_{\mathcal{N}}$ and $\mathbb{U}_{\mathcal{N}}$ is shown for $F_1=\{2,4\}$ and $F_2=\{3,4\}$.}
  \label{fig:sunlet2}
\end{figure}
\begin{proposition}\label{prop::dim}
Let $F_1,F_2\subseteq\Gamma$ with $F_1\cap F_2=\emptyset$, $b^1\in(\mathbb{Z}/2\mathbb{Z})^{F_1}$, $b^2\in(\mathbb{Z}/2\mathbb{Z})^{F_2}$, $b^3\in(\mathbb{Z}/2\mathbb{Z})^{\Gamma\setminus(F_1\cup F_{2})}$.
\begin{enumerate}
\item If $|\Gamma\setminus F_1|$ and $|\Gamma\setminus F_2|$ are even, then dim$(\mathcal{G}(n,F,b))=2^{n-2}$.
\item If $|\Gamma\setminus F_1|$ or $|\Gamma\setminus F_2|$ is odd, then $I_{\mathcal{N}}\cap\mathcal{G}(n,F,b)=\emptyset$.
\end{enumerate}
\end{proposition}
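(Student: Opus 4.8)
The plan is to analyze the structure of the 2-glove $\mathcal{G}(n,F,b)$ by counting the degrees of freedom in choosing the pair of matrices $(G,H)$, and then — for part (2) — to exploit the explicit Fourier parameterization $\psi_{\mathcal{N}}$ to show that every monomial $q_Gq_H$ in an odd-parity glove is itself sent to zero, so that there is no nonzero invariant supported there.

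For part (1), I would fix a valid pair $(G,H)$ and count. On columns $j\in F_1$ the entries $G_{1j}=H_{1j}=b^1_j$ are forced; likewise row $2$ is forced on $F_2$; and on $\Gamma\setminus(F_1\cup F_2)$ the constraint $G_{1j}+G_{2j}=b^3_j$ together with $G_{ij}+H_{ij}=1$ on the ``free'' rows couples the remaining choices. The natural bookkeeping is: choose row $1$ of $G$ freely on $\Gamma\setminus F_1$ (that is $n-|F_1|$ bits) subject only to the single parity condition $\sum_j G_{1j}=0$ (using $\sum_{j\in F_1}b^1_j$ to fix the target parity on the complement — here is where $|\Gamma\setminus F_1|$ even is used so that the parity is actually achievable and cuts the count exactly in half), then row $1$ of $H$ is determined on $\Gamma\setminus F_1$ by $G_{1j}+H_{1j}=1$, and its parity is automatically $0$ precisely because $|\Gamma\setminus F_1|$ is even. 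Symmetrically for row $2$. Finally, row $2$ of $G$ on $\Gamma\setminus(F_1\cup F_2)$ is pinned down by $G_{1j}+G_{2j}=b^3_j$, and row $2$ of $H$ there by $G_{2j}+H_{2j}=1$; one checks the row-$2$ parity conditions hold automatically given the evenness hypotheses and the choices already made. Tallying, the free bits are: row $1$ of $G$ on $\Gamma\setminus F_1$ minus one parity constraint, and row $2$ of $G$ on $F_2\setminus F_1 = F_2$ minus one parity constraint — but these must be recombined carefully using $F_1\cap F_2=\emptyset$, so that the total is $2^{n-2}$: two independent ``halvings'' (one per row) applied to what is a priori an affine space of the right dimension. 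I would present this as: the assignment is determined by (row $1$ of $G$ restricted to $\Gamma\setminus F_1$) and (row $2$ of $G$ restricted to $\Gamma\setminus F_2$), each lying in an affine subspace of $(\mathbb{Z}/2\mathbb{Z})$-dimension $(n-|F_i|)-1$, with the overlap on $\Gamma\setminus(F_1\cup F_2)$ reconciled by $b^3$; a short dimension count then gives $n-2$.

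For part (2), suppose without loss of generality $|\Gamma\setminus F_1|$ is odd. Then for \emph{every} pair $(G,H)$ admissible for the glove, row $1$ of $G$ satisfies $\sum_{j=1}^n G_{1j}=0$ while on $F_1$ it equals $b^1$ and on $\Gamma\setminus F_1$ the values are free; the point is that $H$ must satisfy $G_{1j}+H_{1j}=1$ on the odd-sized set $\Gamma\setminus F_1$ and agree with $G$ on $F_1$, so $\sum_j H_{1j}=\sum_j G_{1j}+|\Gamma\setminus F_1|=0+1=1\neq 0$ in $\mathbb{Z}/2\mathbb{Z}$. Hence $H$ violates the defining condition $\sum_j H_{ij}=0$, so \emph{the glove itself is empty} as a span of monomials (the index set over which we span is empty), whence trivially $I_{\mathcal{N}}\cap\mathcal{G}(n,F,b)=\emptyset$. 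I would double-check the edge case where some of the sets degenerate, but the parity obstruction is robust: the two conditions ``$G$ and $H$ agree on $F_i$'' and ``$G$ and $H$ differ everywhere on $\Gamma\setminus F_i$'' are simultaneously imposed, and their compatibility with $\sum G_{ij}=\sum H_{ij}=0$ forces $|\Gamma\setminus F_i|$ even.

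The main obstacle is the exact dimension count in part (1): one must be careful that the two parity constraints (one for row $1$, one for row $2$) are genuinely independent and that the coupling on $\Gamma\setminus(F_1\cup F_2)$ via $b^3$ and via the ``differ everywhere'' conditions does not introduce a further constraint or a hidden dependency — this is exactly where the hypothesis $F_1\cap F_2=\emptyset$ and the evenness of \emph{both} $|\Gamma\setminus F_1|$ and $|\Gamma\setminus F_2|$ are needed, and it is worth verifying on the running example of Figure~\ref{fig:sunlet2}. Part (2), by contrast, is a one-line parity argument once the right ``agree/differ'' bookkeeping is set up.
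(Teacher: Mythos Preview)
Your proposal is correct and follows essentially the same approach as the paper. For part~(2) the argument is identical: comparing $\sum_j G_{ij}=0$ with $\sum_j H_{ij}=\sum_j G_{ij}+|\Gamma\setminus F_i|$ forces $|\Gamma\setminus F_i|$ to be even, else the spanning set is empty and $\mathcal{G}(n,F,b)=\{0\}$ (your initial plan to invoke $\psi_{\mathcal{N}}$ is not needed, and you rightly drop it). For part~(1) the paper's proof is considerably more compact: it simply counts that $G\in(\mathbb{Z}/2\mathbb{Z})^{2\times n}$ is subject to $2+|F_1|+|F_2|+(n-|F_1\cup F_2|)=n+2$ linear constraints (two parity conditions, the fixed entries on $F_1$ and $F_2$, and the column-sum conditions on $\Gamma\setminus(F_1\cup F_2)$), giving $2^{2n-(n+2)}=2^{n-2}$, with $H$ then determined by $G$. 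Your row-by-row parametrization reaches the same endpoint but with more bookkeeping, and contains a slip: the free part of row~2 of $G$ lies over $F_1$, not over ``$F_2\setminus F_1=F_2$'' (on $F_2$ row~2 is pinned to $b^2$). This does not affect the final count, which you state correctly.
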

\begin{proof}
First, for $q_Gq_H\in\mathcal{G}(n,F,b)$ and $i\in\{1,2\}$, observe that
\begin{align*}
\sum_{j\in\Gamma}G_{ij}&=\sum_{j\in\Gamma\setminus F_i}G_{ij}+\sum_{j\in F_i}b_j^i=0,\\
\sum_{j\in\Gamma}H_{ij}&=\sum_{j\in\Gamma\setminus F_i}G_{ij}+\sum_{j\in\Gamma\setminus F_i}1+\sum_{j\in F_i}b_j^i=0
\end{align*}
only if $|\Gamma\setminus F_1|$ and $|\Gamma\setminus F_2|$ are even. Otherwise, $\mathcal{G}(n,F,b)=\{0\}$. Next, for $q_Gq_H\in\mathcal{G}(n,F,b)$, the choice of $G\in(\mathbb{Z}/2\mathbb{Z})^{2\times n}$ is constrained by $2+|F_1|+|F_2|+n-|F_1\cup F_2|=2+n$ equations because $F_1\cap F_2=\emptyset$. Hence,
\begin{align*}
\text{dim}(\mathcal{G}(n,F,b))=2^{2n-(2+n)}=2^{n-2}.
\end{align*}
%Finally, consider $|\Gamma\setminus F_1|=2$. For indices $j\in F_1\setminus F_2$ (if they exist), we have $b_j^1+G_{2j}=G_{1j}+G_{2j}=b_j^3$, i.e., $G_{1j}$ and $G_{2j}$ are constant. We arrive at the same conclusion for $j\in F_2\setminus F_1$. Hence, without loss of generality $|\Gamma\setminus F_2|=|\Gamma\setminus F_1|$. 
%Then, dim$(\mathcal{G}(n,F,b))=1$, i.e., $\mathcal{G}(n,F,b)=\{q_Gq_H\}$ for a specific non-trivial choice of $G,H\in(\mathbb{Z}/2\mathbb{Z})^{2\times n}$. Since $\psi_{\mathcal{N}}$ is a ring homomorphism, $\psi_{\mathcal{N}}(q_Gq_H)=\psi_{\mathcal{N}}(q_G)\psi_{\mathcal{N}}(q_H)$. By definition, $\psi_{\mathcal{N}}(q_G),\psi_{\mathcal{N}}(q_G)\neq 0$ and the co-domain of $\psi_{\mathcal{N}}$ is integral. Thus, $\psi_{\mathcal{N}}(q_G)\psi_{\mathcal{N}}(q_H)\neq 0$. In other words, $q_Gq_H\notin I_{\mathcal{N}}$.
\end{proof}

Before we consider examples of how to decide whether a polynomial is in $I_{\mathcal{N}}\cap\mathcal{G}(n,F,b)$ or not we derive two technical results for a series of definitions that align with the methodology of Cummings et al.~\cite{CHM24} and which are useful throughout the rest of this article. First, for the sunlet network $S(\mathcal{N})$, we define a partition of edges $e_{n+1},\dots,e_{2n-1}$ from Figure~\ref{fig:sunlet} into the four sets
\begin{align*}
\mathbb{E}_{\mathcal{N}}&=\left\{e_{n+j}\in E(\mathcal{N})\,:\,\left|\{1,\dots,j\}\setminus F_1\right|~\text{and}~\left|\{1,\dots,j\}\setminus F_2\right|~\text{even},~j\in\{2,\dots,n-1\}\right\},\\
\mathbb{O}_{\mathcal{N}}&=\left\{e_{n+j}\in E(\mathcal{N})\,:\,\left|\{1,\dots,j\}\setminus F_1\right|~\text{and}~\left|\{1,\dots,j\}\setminus F_2\right|~\text{odd},~j\in\{2,\dots,n-1\}\right\},\\
\mathbb{U}_{\mathcal{N}}^1&=\left\{e_{n+j}\in E(\mathcal{N})\,:\,\left|\{1,\dots,j\}\setminus F_1\right|~\text{even and}~\left|\{1,\dots,j\}\setminus F_2\right|~\text{odd},~j\in\{2,\dots,n-1\}\right\},\\
\mathbb{U}_{\mathcal{N}}^2&=\left\{e_{n+j}\in E(\mathcal{N})\,:\,\left|\{1,\dots,j\}\setminus F_1\right|~\text{odd and}~\left|\{1,\dots,j\}\setminus F_2\right|~\text{even},~j\in\{2,\dots,n-1\}\right\}.
\end{align*}
We write $\mathbb{U}_{\mathcal{N}}=\mathbb{U}_{\mathcal{N}}^1\cup\mathbb{U}_{\mathcal{N}}^2$ as a reoccurring shorthand notation later in this section. Figure~\ref{fig:sunlet2} shows an example of this partition for $n=6$. Observe that the partition $(\mathbb{E}_{\mathcal{N}},\mathbb{O}_{\mathcal{N}},\mathbb{U}_{\mathcal{N}})$ of edges in $\mathcal{N}$ corresponds to a partition of Fourier parameters in $\psi_{\mathcal{N}}$:
\begin{lemma}\label{lem:colors}
Let $F_1,F_2\subseteq\Gamma$, $b^1\in(\mathbb{Z}/2\mathbb{Z})^{F_1}$, $b^2\in(\mathbb{Z}/2\mathbb{Z})^{F_2}$, $b^3\in(\mathbb{Z}/2\mathbb{Z})^{\Gamma\setminus(F_1\cup F_{2})}$. Let $q_Gq_H\in\mathcal{G}(n,F,b)$, $i\in\{1,2\}$, $k\in\Gamma$ and $F_3\subseteq\Gamma$.
\begin{enumerate}
\item If $k\notin F_3$ or $k\in F_3$ and $G_{ik}=0$ if $k\in F_i$, $G_{ik}=1$ otherwise, then $\left|F_3\setminus F_i\right|$ is even if and only if $\sum_{j\in F_3}G_{ij}=\sum_{j\in F_3\setminus\{k\}}H_{ij}$.
\item If $k\in F_3$ and $G_{ik}=0$ if $k\in\Gamma\setminus F_i$, $G_{ik}=1$ otherwise, then $\left|F_3\setminus F_i\right|$ is odd if and only if $\sum_{j\in F_3}G_{ij}=\sum_{j\in F_3\setminus\{k\}}H_{ij}$.
\end{enumerate}
\end{lemma}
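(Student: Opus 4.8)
The plan is to reduce the biconditional to an elementary parity computation involving the sum $\sum_{j\in F_3}G_{ij}$ versus $\sum_{j\in F_3\setminus\{k\}}H_{ij}$, using the defining relations of the 2-glove $\mathcal{G}(n,F,b)$. First I would split the index set $F_3$ (or $F_3\setminus\{k\}$) according to the partition of $\Gamma$ into $F_i$ and $\Gamma\setminus F_i$, since the glove constraints behave differently on these two pieces: for $j\in F_i$ we have $G_{ij}=H_{ij}=b_j^i$, while for $j\in\Gamma\setminus F_i$ we have $G_{ij}+H_{ij}=1$, i.e.\ $H_{ij}=G_{ij}+1$ in $\mathbb{Z}/2\mathbb{Z}$. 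Thus on the ``fixed'' coordinates $G$ and $H$ agree, and on the ``free'' coordinates $H$ is obtained from $G$ by flipping every bit.

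For part (1), I would assume $k\notin F_3$; then $F_3\setminus\{k\}=F_3$ and the two sums range over the same index set. Writing $F_3=(F_3\cap F_i)\sqcup(F_3\setminus F_i)$, the contributions from $F_3\cap F_i$ cancel (since $G_{ij}=H_{ij}$ there), and on $F_3\setminus F_i$ we get $\sum_{j\in F_3\setminus F_i}(G_{ij}+H_{ij})=\sum_{j\in F_3\setminus F_i}1=|F_3\setminus F_i|\pmod 2$. Hence $\sum_{j\in F_3}G_{ij}+\sum_{j\in F_3}H_{ij}=|F_3\setminus F_i|\pmod 2$, which is $0$ exactly when $|F_3\setminus F_i|$ is even, giving the claimed equivalence. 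The alternative sub-case of part (1), where $k\in F_3$ but $G_{ik}$ is prescribed so that $k$ contributes $0$ to $\sum_{j\in F_3}G_{ij}$ (namely $G_{ik}=0$ if $k\in F_i$ and $G_{ik}=1$ if $k\notin F_i$; the latter forces, via $H_{ik}=G_{ik}+1$, that $H_{ik}=0$ as well, consistent with $k$ being dropped from the $H$-sum), reduces to the same computation after checking that removing $k$ from both the $G$-index set and the $H$-index set changes neither sum. Part (2) is entirely analogous: there $k\in F_3$ and $G_{ik}$ is chosen to contribute $1$ rather than $0$, which shifts the parity by one and therefore swaps ``even'' for ``odd'' in the conclusion; the bookkeeping is to verify that the single index $k$ accounts for exactly this parity shift, the remaining indices in $F_3\setminus\{k\}$ being handled by the identical cancellation argument.

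The main obstacle is not any deep idea but careful case management: one must correctly track, for the prescribed value of $G_{ik}$, whether $k$ belongs to $F_i$ or to $\Gamma\setminus F_i$, what the glove relations then force for $H_{ik}$, and whether this is compatible with $k$ being absent from the $H$-sum $\sum_{j\in F_3\setminus\{k\}}H_{ij}$. The hypotheses in the two items are phrased precisely so that in each case the ``extra'' index $k$ contributes a controlled amount ($0$ in item 1, $1$ in item 2) to exactly one of the two sums, and the residual sums over $F_3\setminus(\{k\}\cup F_i)$ match by the flip-parity identity above. I would present the argument by isolating this single parity lemma — ``over the free coordinates, the $G$-sum and $H$-sum differ by the number of free coordinates'' — and then reading off all four sub-cases from it.
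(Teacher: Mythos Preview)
Your approach is essentially the paper's: both proofs reduce to the parity identity $\sum_{j\in S}(G_{ij}+H_{ij})\equiv |S\setminus F_i|\pmod 2$ coming directly from the glove relations, and then handle the distinguished index $k$ by a short case analysis on whether $k\in F_i$.

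One slip to fix when you write it out. In the second sub-case of item~(1) with $k\notin F_i$ you have $G_{ik}=1$, so $k$ does \emph{not} contribute $0$ to $\sum_{j\in F_3}G_{ij}$, and your claim that ``removing $k$ from both the $G$-index set and the $H$-index set changes neither sum'' is false for the $G$-sum. The reduction still goes through, but the clean way is to argue only on the $H$ side: in both sub-sub-cases of item~(1) one has $H_{ik}=0$ (either because $H_{ik}=G_{ik}=b^i_k=0$ when $k\in F_i$, or because $H_{ik}=G_{ik}+1=0$ when $k\notin F_i$), so $\sum_{j\in F_3\setminus\{k\}}H_{ij}=\sum_{j\in F_3}H_{ij}$ and you are literally back to the $k\notin F_3$ computation. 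Dually, in item~(2) both sub-sub-cases give $H_{ik}=1$, which supplies exactly the parity flip you need. This is the ``controlled amount'' you describe, but it lives in the $H$-sum, not the $G$-sum.
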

A proof of Lemma~\ref{lem:colors} is given in Appendix~\ref{appendix}. Now, we can relate our partition of Fourier parameters to the image of $\psi_{\mathcal{N}}$. Subsequently, this will allow us to study a binary encoding of non-redundant structures in the image of $\psi_{\mathcal{N}}$ which will aid our construction of $I_{\mathcal{N}}\cap\mathcal{G}(n,F,b)$.

For $G\in(\mathbb{Z}/2\mathbb{Z})^{2\times n}$, $i\in\{1,2\}$, $j\in\Gamma$, let $G_{i\cdot}$ and $G_{\cdot j}$ denote the $i$-th row and $j$-th column of $G$, respectively. Let $L(n,F,b)$ denote the set of matrices $G\in(\mathbb{Z}/2\mathbb{Z})^{2\times n}$ such that $q_Gq_H\in\mathcal{G}(n,F,b)$ and the vector $(G_{1\cdot},G_{2\cdot})$ is lexicographically smaller than $(H_{1\cdot},H_{2\cdot})$. Then, for $G\in L(n,F,b)$, a subset $S$ of internal edges in $\mathcal{N}$ and $r,s\in\{1,2\}$, let matrices $$c_{r,s}^{=}(S,G),~c_{r,s}^{\neq}(S,G),~c_{r,s}^{\sim}(S,G)\in(\mathbb{Z}/2\mathbb{Z})^{2\times|S|}$$ be defined by
\begin{align*}
c_{r,s}^=(S,G)&=\left[\begin{array}{c}\sum_{l=r}^jG_{1l}\\ \sum_{l=s}^jH_{2l}\end{array}\right]_{e_{n+j}\in S},~&~~c_{r,s}^{\neq}(S,G)&=\left[\begin{array}{c}\sum_{l=r}^jG_{1l}+\sum_{l=s}^jH_{2l}\\ 0\end{array}\right]_{e_{n+j}\in S}
\end{align*}
and, for $e_{n+j}\in S$,
\begin{align*}
c_{r,s}^{\sim}(S,G)_{\cdot j}&=\begin{cases}
\left(\begin{array}{c}\sum_{l=r}^jG_{1l}+\sum_{l=s}^jH_{1l}\\ \sum_{l=r}^jG_{2l}\end{array}\right) &\text{if }\sum\limits_{l=r}^j\left(G_{1l}+G_{2l}\right)=0,\\
\left(\begin{array}{c}\sum_{l=r}^jG_{1l}+\sum_{l=s}^jH_{1l}\\ \sum_{l=s}^jH_{2l}\end{array}\right) &\text{otherwise,}
\end{cases}
\end{align*}
The purpose of these matrices is to encode the relationship between vectors $v(G,j)=\sum_{l=r}^jG_{\cdot l}$ and $v(H,j)=\sum_{l=s}^jH_{\cdot l}$ with respect to edges $e_{n+j}\in S$: if $v(G,j)=v(H,j)$ or $v(G,j)\neq v(H,j)$, then the pair $v(G,j)$ and $v(H,j)$ can be uniquely identified by $c_{r,s}^{=}(S,G)$ or $c_{r,s}^{\neq}(S,G)$, respectively. If $v(G,j)_1=v(H,j)_1$ and $v(G,j)_2\neq v(H,j)_2$, then $c_{r,s}^{\sim}(S,G)=(0,v(G,j)_1)$, and, if $v(G,j)_1\neq v(H,j)_1$ and $v(G,j)_2=v(H,j)_2$, then $c_{r,s}^{\sim}(S,G)=(1,v(G,j)_2)$. Hence, the remaining pairs of vectors $v(G,j)$ and $v(H,j)$ can be uniquely identified by $c_{r,s}^{\sim}(S,G)$. An alternative representation of matrix $c_{r,s}^{\sim}(S,G)$ which will be useful later in this section is given by
\begin{align*}
c_{r,s}^{\sim}(S,G)_{\cdot j}&=\begin{cases}
\left(\begin{array}{c}\sum_{l=r}^jG_{1l}+\sum_{l=s}^jH_{1l}\\ \sum_{l=s}^jH_{1l}\end{array}\right) &\text{if }\sum\limits_{l=s}^j\left(H_{1l}+H_{2l}\right)=0,\\
\left(\begin{array}{c}\sum_{l=r}^jG_{1l}+\sum_{l=s}^jH_{1l}\\ \sum_{l=r}^jG_{1l}\end{array}\right) &\text{otherwise,}
\end{cases}~&~&\forall\,e_{n+j}\in S.
\end{align*}
We will make use of these definitions with the following linear maps: for $r,s\in\{1,2\}$ and pairwise disjoint sets $S_1,S_2,S_3\subseteq\{e_2,\dots,e_{n-1}\}$ with $|S_1\cup S_2\cup S_3|=n-2$, let
\begin{align*}
M_{2r+s-2}^{F,b}(S_1,S_2,S_3):\mathcal{G}(n,F,b)&\to\mathbb{C}^{(\mathbb{Z}/2\mathbb{Z})^{2\times(n-2)}}\\
q_Gq_H&\mapsto e_{c_{r,s}^{=}(S_1,G)\oplus c_{r,s}^{\neq}(S_2,G)\oplus c_{r,s}^{\sim}(S_3,G)}
\end{align*}
where $e_i$ denotes the standard basis vector in $\mathbb{C}^m$. For $i\in\{1,\dots,4\}$, we write $M_i^{F,b}(\mathbb{E}_{\mathcal{N}})$, $M_i^{F,b}(\mathbb{O}_{\mathcal{N}})$, $M_i^{F,b}(\mathbb{U}_{\mathcal{N}}^1)$ and $M_i^{F,b}(\mathbb{U}_{\mathcal{N}}^2)$ as shorthand notation for 
\begin{align*}
&M_i^{F,b}(\mathbb{E}_{\mathcal{N}},\mathbb{O}_{\mathcal{N}},\mathbb{U}_{\mathcal{N}}),~~M_i^{F,b}(\mathbb{O}_{\mathcal{N}},\mathbb{E}_{\mathcal{N}},\mathbb{U}_{\mathcal{N}}),~~M_i^{F,b}(\mathbb{U}_{\mathcal{N}}^1,\mathbb{U}_{\mathcal{N}}^2,\mathbb{E}_{\mathcal{N}}\cup\mathbb{O}_{\mathcal{N}})\\
&\text{and}~~M_i^{F,b}(\mathbb{U}_{\mathcal{N}}^2,\mathbb{U}_{\mathcal{N}}^1,\mathbb{E}_{\mathcal{N}}\cup\mathbb{O}_{\mathcal{N}}),
\end{align*}
respectively. Furthermore, for $S_1,\dots,S_4\in\{\mathbb{E}_{\mathcal{N}},\mathbb{O}_{\mathcal{N}},\mathbb{U}_{\mathcal{N}}^1,\mathbb{U}_{\mathcal{N}}^2\}$, we define
\begin{align*}
\text{ker}(S_1,S_2,S_3,S_4):=\bigcap_{j=1}^4\text{ker}\left(M_j^{F,b}\left(S_j\right)\right)
\end{align*}
to improve the readability of the following proposition whose proof is given in Appendix~\ref{appendix}:
%For example $c_{\mathbb{O}}(G,1,2)_{e_{n+j}}=(1,0)$
%\begin{lemma}
%test
%\end{lemma}
%\begin{proof}
%Moreover, for $q_Gq_H\in\mathcal{G}(n,\emptyset)$,
%\begin{align*}
%c_{\mathbb{E}}(G,1,1)&=\left[\begin{array}{c}\sum_{l=1}^jG_{1l}\\ \sum_{l=1}^jG_{2l}\end{array}\right]_{e_{n+j}\in\mathbb{E}_{\mathcal{N}}}=\left[\begin{array}{c}\sum_{l=1}^jG_{1l}\\ \sum_{l=1}^j(G_{1l}+b_l^3)\end{array}\right]_{e_{n+j}\in\mathbb{E}_{\mathcal{N}}},\\
%c_{\mathbb{O}}(G,1,1)&=\left[\begin{array}{c}\sum_{l=1}^j(G_{1l}+G_{2l})+\sum_{l\in\{1,\dots,j\}\setminus F_2}1\\ 0\end{array}\right]_{e_{n+j}\in\mathbb{O}_{\mathcal{N}}}\\
%&=\left[\begin{array}{c}\sum_{l=1}^jb_l^3+\sum_{l\in\{1,\dots,j\}\setminus F_2}1\\ 0\end{array}\right]_{e_{n+j}\in\mathbb{O}_{\mathcal{N}}}
%\end{align*}
%This means, there are $2^{|\mathbb{E}_{\mathcal{N}}|}=2^{n/2-1}$ pairwise distinct matrices $c_{\mathbb{E}}(G,1,1)\oplus c_{\mathbb{O}}(G,1,1)\oplus c_{\mathbb{U}}(G,1,1)$. Then, by the rank-nullity theorem we conclude
%\begin{align*}
%\text{dim}(I_{\mathcal{N}}\cap\mathcal{G}(n,\emptyset))&=\,\text{dim}\left(\text{ker}\left(M_1^{\emptyset}\right)\right)=\,\text{dim}\left(\mathcal{G}(n,\emptyset)\right)-\,\text{rank}\left(M_1^{\emptyset}\right)=2^{n-2}-2^{n/2-1}.
%\end{align*}
%\end{proof}

\begin{proposition}\label{prop::IG}
Let $S(\mathcal{N})$ be the sunlet network of $\Gamma$ and let $\mathcal{T}$ denote the tree we obtain from $\mathcal{N}$ by deleting leaf $1$ and its neighbour in $\mathcal{N}$. Let $F_1,F_2\subseteq\Gamma$,  $b^1\in(\mathbb{Z}/2\mathbb{Z})^{F_1}$, $b^2\in(\mathbb{Z}/2\mathbb{Z})^{F_2}$, $b^3\in(\mathbb{Z}/2\mathbb{Z})^{\Gamma\setminus(F_1\cup F_{2})}$. Then,
\begin{align*}
I_{\mathcal{N}}\cap\mathcal{G}(n,F,b)&=\begin{cases}I_{\mathcal{T}}\cap\mathcal{G}(n,F,b) &\text{if }1\in F_1\cap F_2,~(b_1^1,b_1^2)=(0,0),\\
\mathrm{ker}\left(M_1^{F,b}(\mathbb{E}_{\mathcal{N}})\right)\cap\,\mathrm{ker}\left(M_2^{F,b}(\mathbb{U}_{\mathcal{N}}^1)\right) &\text{if }1\in F_1\cap F_2,~(b_1^1,b_1^2)=(0,1),\\
&\text{or }1\in F_1,~1\notin F_2,~b_1^1=0,\\
\mathrm{ker}\left(M_1^{F,b}(\mathbb{E}_{\mathcal{N}})\right)\cap\,\mathrm{ker}\left(M_2^{F,b}(\mathbb{U}_{\mathcal{N}}^2)\right) &\text{if }1\in F_1\cap F_2,~(b_1^1,b_1^2)=(1,0),\\
&\text{or }1\notin F_1,~1\in F_2,~b_1^2=0,\\
\mathrm{ker}\left(M_1^{F,b}(\mathbb{E}_{\mathcal{N}})\right)\cap\,\mathrm{ker}\left(M_2^{F,b}(\mathbb{O}_{\mathcal{N}})\right) &\text{if }1\in F_1\cap F_2,~(b_1^1,b_1^2)=(1,1),\\
&\text{or }1\notin F_1\cup F_2,~b_1^3=0,\\
\mathrm{ker}\,(\mathbb{E}_{\mathcal{N}},\mathbb{U}_{\mathcal{N}}^2,\mathbb{U}_{\mathcal{N}}^1,\mathbb{O}_{\mathcal{N}}) &\text{if }1\notin F_1\cup F_2,~b_1^3=1,\\
\mathrm{ker}\,(\mathbb{E}_{\mathcal{N}},\mathbb{O}_{\mathcal{N}},\mathbb{U}_{\mathcal{N}}^1,\mathbb{U}_{\mathcal{N}}^2) &\text{otherwise.}
\end{cases}
\end{align*}
\end{proposition}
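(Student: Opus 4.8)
The plan is to analyze $I_{\mathcal{N}} \cap \mathcal{G}(n,F,b)$ by splitting along the reticulation structure of the sunlet, exploiting the explicit parameterization $\psi_{\mathcal{N}}$. Recall that $\psi_{\mathcal{N}}(q_G) = \prod_{j=1}^n a_{G_{\cdot j}}^j \bigl(\prod_{j=1}^{n-1} a^{n+j}_{\sum_{l=1}^j G_{il}} + \prod_{j=2}^n a^{n+j}_{\sum_{l=2}^j G_{il}}\bigr)$ on each row, so a quadratic form $\sum c_{G,H}\, q_G q_H$ lies in $I_{\mathcal{N}}$ precisely when the images cancel. The first step is to handle the degenerate case $1 \in F_1 \cap F_2$ with $(b_1^1,b_1^2) = (0,0)$: here every matrix $G$ appearing in $\mathcal{G}(n,F,b)$ has $G_{\cdot 1} = (0,0)^T$, which forces $a^1_{G_{\cdot 1}} = a^1_{(0,0)}$ to factor out uniformly, and more importantly the sums $\sum_{l=1}^j$ and $\sum_{l=2}^j$ become identical, so $\psi_{\mathcal{N}}(q_G) = a^1_{(0,0)} \cdot 2 \cdot \psi_{\mathcal{T}}(q_{G})$ up to relabeling — i.e. the sunlet parameterization collapses to twice the tree parameterization for the tree $\mathcal{T}$ obtained by deleting leaf $1$. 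Hence $I_{\mathcal{N}} \cap \mathcal{G}(n,F,b) = I_{\mathcal{T}} \cap \mathcal{G}(n,F,b)$ in this case.

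For the remaining cases the idea is that the "two terms" in $\psi_{\mathcal{N}}(q_G)$ no longer coincide, so a quadratic invariant must cancel the images termwise after expanding the product $\psi_{\mathcal{N}}(q_G)\psi_{\mathcal{N}}(q_H)$ (which has four summands: $GG$, $GH$, $HG$, $HH$ crossed with the two cyclic choices). The key observation — this is where Lemma~\ref{lem:colors} and the $c^{=}, c^{\neq}, c^{\sim}$ matrices enter — is that, once we fix which leaves lie in $F_1, F_2$ and fix the $b$-values, the partial sums $\sum_{l=r}^j G_{il}$ along the cycle are determined up to the contribution of the "free" columns, and the parity data encoded by $\mathbb{E}_{\mathcal{N}}, \mathbb{O}_{\mathcal{N}}, \mathbb{U}_{\mathcal{N}}^1, \mathbb{U}_{\mathcal{N}}^2$ records exactly whether the internal Fourier parameter $a^{n+j}$ gets the same index from $G$ as from $H$ (forcing the $c^{=}$ encoding), always different indices (the $c^{\neq}$ encoding), or a mixed situation in one of the two coordinates (the $c^{\sim}$ encoding). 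So I would: (i) expand $\psi_{\mathcal{N}}(q_G)\psi_{\mathcal{N}}(q_H)$, using that the leaf-parameters $a^j$ for $j \in F_i$ are equal for $G$ and $H$ and for $j \notin F_i$ are "opposite" (differing by the nonzero group element), and that $G_{1 \cdot} + G_{2\cdot}$ on free columns equals $H_{1\cdot}+H_{2\cdot}$; (ii) show that the monomial in the $a$-parameters obtained from $q_G q_H$ is determined exactly by the tuple $\bigl(c_{r,s}^{=}(\mathbb{E}_{\mathcal{N}},G),\ c_{r,s}^{\neq}(\cdot),\ c_{r,s}^{\sim}(\cdot)\bigr)$ over the appropriate partition, where $(r,s) \in \{1,2\}^2$ indexes which of the two cyclic terms is taken from $\psi_{\mathcal{N}}(q_G)$ and which from $\psi_{\mathcal{N}}(q_H)$; (iii) conclude that $\sum c_{G,H} q_G q_H \in I_{\mathcal{N}}$ iff the coefficients sum to zero on each fiber of each of the four maps $M_i^{F,b}$, which is precisely $\bigcap_i \ker M_i^{F,b}$. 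The case distinctions on $(b_1^1, b_1^2)$ (or on $b_1^3$) arise because the value at leaf $1$ dictates whether, starting the partial sums at $l=1$ versus $l=2$, the two cyclic terms agree or are shifted, which determines which partition class (and hence which of $\mathbb{E}, \mathbb{O}, \mathbb{U}^1, \mathbb{U}^2$) the map $M_2$ must be built on, while $M_1$ is always on $\mathbb{E}_{\mathcal{N}}$.

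The asymmetry between the first four nondegenerate cases (only two kernels, $M_1$ on $\mathbb{E}_{\mathcal{N}}$ and $M_2$ on one of $\mathbb{U}^1, \mathbb{U}^2, \mathbb{O}$) and the last two (all four kernels) comes from whether leaf $1$ is "split" by the pair $\{F_1, F_2\}$ in a way that decouples the two cyclic terms across \emph{both} rows or only one: when $1 \in F_1 \cap F_2$ the two rows behave in a correlated way and only two independent constraints survive, whereas when $1 \notin F_1 \cup F_2$ the free column at position $1$ introduces an extra degree of freedom that forces consideration of all four parity classes. I would verify this by a direct (but careful) bookkeeping of which $(r,s)$-terms produce equal monomials, using the two equivalent formulas given for $c_{r,s}^{\sim}$ to handle the two branches of its definition symmetrically.

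The main obstacle I anticipate is step (ii): showing that the $a$-monomial attached to $q_G q_H$ is a \emph{faithful} invariant of the $c$-matrix tuple — i.e. that two pairs $(G,H)$ and $(G',H')$ give the same monomial (for a fixed choice of cyclic terms) if and only if their $c^{=}/c^{\neq}/c^{\sim}$ encodings agree. The "if" direction is a computation; the "only if" direction requires checking that the encoding loses no information, which is exactly the content of the remarks following the definition of $c_{r,s}^{\sim}$ (that the three cases partition all possible relations between $v(G,j)$ and $v(H,j)$ and that each is recoverable), combined with the constraints from $\mathcal{G}(n,F,b)$ that pin down the leaf-parameters. Threading Lemma~\ref{lem:colors} through to show the partial-sum parities along the cycle are consistent with membership in the correct partition class is the delicate part; everything else is cancellation bookkeeping.
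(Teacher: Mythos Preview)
Your overall strategy is the same as the paper's: expand $\psi_{\mathcal{N}}(q_G)\psi_{\mathcal{N}}(q_H)$ into four cross-terms indexed by $(r,s)\in\{1,2\}^2$, show each term's vanishing is equivalent to the corresponding $M_{2r+s-2}^{F,b}$ kernel condition via Lemma~\ref{lem:colors}, and then case-split on the data at leaf~$1$ to see which of the four kernel conditions are redundant. Your treatment of the degenerate case is essentially right (though the scalar is $a^1_{(0,0)}\bigl(a^{n+1}_{(0,0)}+a^{2n}_{(0,0)}\bigr)$, not a literal factor of~$2$).

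However, your explanation of \emph{why} two kernels sometimes suffice is wrong, and following it would not lead you to the correct case table. You claim the two-versus-four distinction is governed by whether $1\in F_1\cap F_2$ or $1\notin F_1\cup F_2$; but the proposition itself shows $1\notin F_1\cup F_2$ with $b_1^3=0$ gives only two kernels, and the mixed cases $1\in F_1\setminus F_2$ give either two or four depending on $b_1^1$. The actual mechanism in the paper is different and twofold: (a) whenever the lexicographically smaller index satisfies $G_{\cdot 1}=(0,0)$, the partial sums $\sum_{l=1}^j G_{\cdot l}$ and $\sum_{l=2}^j G_{\cdot l}$ coincide, collapsing $\ker M_1=\ker M_3$ and $\ker M_2=\ker M_4$; and (b) when $1\in F_1\cap F_2$ so that $G_{\cdot 1}=H_{\cdot 1}=\beta\neq 0$, the four equations pair up because adding $\beta$ to every subscript swaps them. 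You also need, before separating into four independent equations, to argue that the four cross-terms cannot cancel \emph{among themselves} for a single $q_Gq_H$; this is immediate when $\beta\neq\gamma$ (distinct prefactors $a^{n+1}_\beta a^{n+1}_\gamma$, $a^{n+1}_\beta a^{2n}_\gamma$, etc.), but in case~(b) the middle two terms share the prefactor $a^{n+1}_\beta a^{2n}_\beta$ and one must check separately (using $\mathbb{E}_{\mathcal{N}}\cup\mathbb{U}_{\mathcal{N}}\neq\emptyset$ or its complement) that they still do not collide. Your ``faithfulness'' concern in step~(ii) is legitimate but is handled by construction: the $c^{=},c^{\neq},c^{\sim}$ encodings were designed precisely so that the unordered pair $\{v(G,j),v(H,j)\}$ is recoverable from the appropriate encoding on each partition class, and Lemma~\ref{lem:colors} certifies which class each edge $e_{n+j}$ falls into.
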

Now, we can apply Proposition~\ref{prop::IG} to test the membership of polynomials in $I_{\mathcal{N}}\cap\mathcal{G}(n,F,b)$. For example, consider $n=4$ and polynomials
\begin{align*}
f_1&=q_{\left(\begin{array}{cccc} 0 & 0 & 0 & 0\\ 0 & 0 & 1 & 1\end{array}\right)}q_{\left(\begin{array}{cccc} 1 & 1 & 1 & 1\\ 1 & 1 & 0 & 0\end{array}\right)}-q_{\left(\begin{array}{cccc} 0 & 0 & 1 & 1\\ 0 & 0 & 0 & 0\end{array}\right)}q_{\left(\begin{array}{cccc} 1 & 1 & 0 & 0\\ 1 & 1 & 1 & 1\end{array}\right)},\\
f_2&=q_{\left(\begin{array}{cccc} 0 & 0 & 0 & 0\\ 0 & 1 & 0 & 1\end{array}\right)}q_{\left(\begin{array}{cccc} 1 & 0 & 0 & 1\\ 1 & 0 & 1 & 0\end{array}\right)}-q_{\left(\begin{array}{cccc} 0 & 0 & 0 & 0\\ 0 & 0 & 1 & 1\end{array}\right)}q_{\left(\begin{array}{cccc} 1 & 0 & 0 & 1\\ 1 & 1 & 0 & 0\end{array}\right)},
\end{align*}
in the $q$-coordinates on the sunlet network. Then, $$f_1\in\mathcal{G}(4,\emptyset,\{(0,0,1,1)\})~~~\text{and}~~~f_2\in\mathcal{G}(4,\{\{2,3\},\emptyset\},\{(0,0),(0,1)\}).$$ Since $f_1$ and $f_2$ are in distinct 2-gloves, $f_1+f_2$ can not be in a 2-glove and is therefore not a phylogenetic invariant of $\mathcal{N}$. For $F=\emptyset$, we have
\begin{align*}
\mathbb{E}_{\mathcal{N}}=\{e_{4+2}\},~\mathbb{O}_{\mathcal{N}}=\{e_{4+3}\}~~\text{and}~~\mathbb{U}_{\mathcal{N}}=\emptyset,
\end{align*}
and for $F=\{\{2,3\},\emptyset\}$, we have
\begin{align*}
\mathbb{E}_{\mathcal{N}}=\emptyset,~\mathbb{O}_{\mathcal{N}}=\{e_{4+3}\},~\mathbb{U}_{\mathcal{N}}^1=\emptyset~~\text{and}~~\mathbb{U}_{\mathcal{N}}^2=\{e_{4+2}\}.
\end{align*}
From Proposition~\ref{prop::IG} we know that $f_i$, $i\in\{1,2\}$, is a phylogenetic invariant of $\mathcal{N}$ only if 
\begin{align*}
f_i\in\,\text{ker}\left(M_1^{\mathcal{F},b}(\mathbb{E}_{\mathcal{N}})\right)\cap\,\text{ker}\left(M_2^{\mathcal{F},b}(\mathbb{O}_{\mathcal{N}})\right)
\end{align*}
because in both cases $1\notin F_1\cup F_2$ and $b_1^3=0$. Then, for $f_1$ we get
\begin{align*}
c_{1,1}^{=}\left(\mathbb{E}_{\mathcal{N}},\left(\begin{array}{cccc} 0 & 0 & 0 & 0\\ 0 & 0 & 1 & 1\end{array}\right)\right)&=\left(\begin{array}{c}0\\ 0\end{array}\right)=c_{1,1}^{=}\left(\mathbb{E}_{\mathcal{N}},\left(\begin{array}{cccc} 0 & 0 & 1 & 1\\ 0 & 0 & 0 & 0\end{array}\right)\right)\\
c_{1,1}^{\neq}\left(\mathbb{O}_{\mathcal{N}},\left(\begin{array}{cccc} 0 & 0 & 0 & 0\\ 0 & 0 & 1 & 1\end{array}\right)\right)&=\left(\begin{array}{c}0\\ 0\end{array}\right)=c_{1,1}^{\neq}\left(\mathbb{O}_{\mathcal{N}},\left(\begin{array}{cccc} 0 & 0 & 1 & 1\\ 0 & 0 & 0 & 0\end{array}\right)\right)\\
c_{1,2}^{=}\left(\mathbb{O}_{\mathcal{N}},\left(\begin{array}{cccc} 0 & 0 & 0 & 0\\ 0 & 0 & 1 & 1\end{array}\right)\right)&=\left(\begin{array}{c}0\\ 1\end{array}\right)\neq \left(\begin{array}{c}1\\ 0\end{array}\right)=c_{1,2}^{=}\left(\mathbb{O}_{\mathcal{N}},\left(\begin{array}{cccc} 0 & 0 & 1 & 1\\ 0 & 0 & 0 & 0\end{array}\right)\right)\\
c_{1,2}^{\neq}\left(\mathbb{E}_{\mathcal{N}},\left(\begin{array}{cccc} 0 & 0 & 0 & 0\\ 0 & 0 & 1 & 1\end{array}\right)\right)&=\left(\begin{array}{c}0\\ 0\end{array}\right)=c_{1,2}^{\neq}\left(\mathbb{E}_{\mathcal{N}},\left(\begin{array}{cccc} 0 & 0 & 1 & 1\\ 0 & 0 & 0 & 0\end{array}\right)\right)
\end{align*}
Hence,
\begin{align*}
\left(M_1^{\mathcal{F},b}(\mathbb{E}_{\mathcal{N}})\right)(f_1)&=e_{(0,0)\oplus(0,0)}-e_{(0,0)\oplus(0,0)}=0,\\
\left(M_2^{\mathcal{F},b}(\mathbb{O}_{\mathcal{N}})\right)(f_1)&=e_{(0,1)\oplus (0,0)}-e_{(1,0)\oplus(0,0)}\neq 0.
\end{align*}
Thus, $f_1$ is not a phylogenetic invariant of $\mathcal{N}$. Contrary, for $f_2$ we get
\begin{align*}
c_{1,1}^{\neq}\left(\mathbb{O}_{\mathcal{N}},\left(\begin{array}{cccc} 0 & 0 & 0 & 0\\ 0 & 1 & 0 & 1\end{array}\right)\right)&=\left(\begin{array}{c}0\\ 0\end{array}\right)=c_{1,1}^{\neq}\left(\mathbb{O}_{\mathcal{N}},\left(\begin{array}{cccc} 0 & 0 & 0 & 0\\ 0 & 0 & 1 & 1\end{array}\right)\right)\\
c_{1,1}^{\sim}\left(\mathbb{U}_{\mathcal{N}},\left(\begin{array}{cccc} 0 & 0 & 0 & 0\\ 0 & 1 & 0 & 1\end{array}\right)\right)&=\left(\begin{array}{c}1\\ 0\end{array}\right)=c_{1,1}^{\sim}\left(\mathbb{U}_{\mathcal{N}},\left(\begin{array}{cccc} 0 & 0 & 0 & 0\\ 0 & 0 & 1 & 1\end{array}\right)\right)\\
c_{1,2}^{=}\left(\mathbb{O}_{\mathcal{N}},\left(\begin{array}{cccc} 0 & 0 & 0 & 0\\ 0 & 1 & 0 & 1\end{array}\right)\right)&=\left(\begin{array}{c}0\\ 1\end{array}\right)=c_{1,2}^{=}\left(\mathbb{O}_{\mathcal{N}},\left(\begin{array}{cccc} 0 & 0 & 0 & 0\\ 0 & 0 & 1 & 1\end{array}\right)\right)\\
c_{1,2}^{\sim}\left(\mathbb{U}_{\mathcal{N}},\left(\begin{array}{cccc} 0 & 0 & 0 & 0\\ 0 & 1 & 0 & 1\end{array}\right)\right)&=\left(\begin{array}{c}0\\ 1\end{array}\right)=c_{1,2}^{\sim}\left(\mathbb{U}_{\mathcal{N}},\left(\begin{array}{cccc} 0 & 0 & 0 & 0\\ 0 & 0 & 1 & 1\end{array}\right)\right)
\end{align*}
This means,
\begin{align*}
\left(M_1^{\mathcal{F},b}(\mathbb{E}_{\mathcal{N}})\right)(f_2)&=e_{(0,0)\oplus(1,0)}-e_{(0,0)\oplus(1,0)}=0,\\
\left(M_2^{\mathcal{F},b}(\mathbb{O}_{\mathcal{N}})\right)(f_2)&=e_{(0,1)\oplus (0,1)}-e_{(0,1)\oplus(0,1)}=0.
\end{align*}
Thus, $f_2$ is a phylogenetic invariant of $\mathcal{N}$. 

We will show that considering specific differences of binomials like $f_2$ are sufficient to generate all quadratic phylogenetic invariants of $\mathcal{N}$. Therefore, we need to formalize the observations in our example. To this end, we define a first ingredient:

\begin{definition}\label{def:basicMon}
Let $F_1,F_2\subseteq\Gamma$ with $F_1\cap F_2=\emptyset$, $b^1\in(\mathbb{Z}/2\mathbb{Z})^{F_1}$, $b^2\in(\mathbb{Z}/2\mathbb{Z})^{F_2}$, $b^3\in(\mathbb{Z}/2\mathbb{Z})^{\Gamma\setminus(F_1\cup F_{2})}$. Let $c\in(\mathbb{Z}/2\mathbb{Z})^{n}$ with $c_1=c_n=0$ and let $G(b,c),H(b,c)\in(\mathbb{Z}/2\mathbb{Z})^{2\times n}$ be defined by $G(b,c)_{\cdot j}=G^1(b,c)_{\cdot j}+G^2(b,c)_{\cdot j}$ for all $j\in\{1,\dots,n-1\}$ with
\begin{align*}
G^1(b,c)_{\cdot j}&=\begin{cases}
\left(b_j^1,c_j\right) &\text{if }j\in F_1,\\
\left(c_j,b_j^2\right) &\text{if }j\in F_2,\\
\left(c_j,c_j+b_j^3\right) &\text{otherwise},
\end{cases}~\\
G^2(b,c)_{\cdot j}&=\begin{cases}
%(c_{j-1},c_{j-1}) &\text{if }e_{n+j-1}\in\mathbb{O}_{\mathcal{N}}\text{ and }(j-1)\notin F_1\cup F_2,\\
%&\text{if }e_{n+j-1}\notin\mathbb{E}_{\mathcal{N}}\cup\mathbb{U}_{\mathcal{N}}\\
%&~~~\text{and }e_{n+j-1}\notin\mathbb{O}_{\mathcal{N}}~\vee~(j-1)\notin F_1\cup F_2,\\
%&~~~\text{and }e_{n+j-1}\notin\mathbb{U}_{\mathcal{N}}~\vee~(j-1)\notin\Gamma\setminus(F_1\cup F_2)\\
(0,c_{j-1}) &\text{if }e_{n+j-1}\in\mathbb{U}_{\mathcal{N}}\text{ and }(j-1),j\in F_1,\\
(0,c_{j}) &\text{if }e_{n+j}\in\mathbb{U}_{\mathcal{N}}\text{ and }j\in F_1,\,(j+1)\notin F_1,\\
%&\text{if }e_{n+j-1}\notin\mathbb{E}_{\mathcal{N}}\cup\mathbb{O}_{\mathcal{N}}\\
%&~~~\text{and }e_{n+j-1}\notin\mathbb{U}_{\mathcal{N}}~\vee~(j-1)\in F_1,\\
(c_{j-1},0) &\text{if }e_{n+j-1}\in\mathbb{U}_{\mathcal{N}}\text{ and }(j-1),j\in F_2,\\
(c_{j},0) &\text{if }e_{n+j}\in\mathbb{U}_{\mathcal{N}}\text{ and }j\in F_2,\,(j+1)\notin F_2,\\
%&\text{if }e_{n+j-1}\notin\mathbb{E}_{\mathcal{N}}\cup\mathbb{O}_{\mathcal{N}}\\
%&~~~\text{and }e_{n+j-1}\notin\mathbb{U}_{\mathcal{N}}~\vee~(j-1)\in F_2,\\
(0,0) &\text{otherwise,}
%&\text{if }e_{n+j-1}\notin\mathbb{E}_{\mathcal{N}}\\
%&~~~\text{and }e_{n+j-1}\notin\mathbb{O}_{\mathcal{N}}~\vee~(j-1)\notin F_1\cup F_2\\
%&~~~\text{and }e_{n+j-1}\notin\mathbb{U}_{\mathcal{N}}~\vee~(j-1)\notin\Gamma\setminus(F_1\cup F_2)
\end{cases}~\\
G(b,c)_{\cdot n}&=\sum_{j=1}^{n-1}G(b,c)_{\cdot j}
\end{align*}
%\begin{align*}
%G(b^3,c)_{\cdot j}&=\begin{cases}
%\left(\begin{array}{c} c_j\\ c_j+b_j^3 \end{array}\right) &\text{if }j\notin F_1\cup F_2\text{ and }e_{n+j-1}\in\mathbb{E}_{\mathcal{N}}\\
%&~~~~~~~~~~~~~~~~~~~~~\text{or }e_{n+j-1}\in\mathbb{O}_{\mathcal{N}}~\wedge~(j-1)\in F_1\cup F_2\\
%&~~~~~~~~~~~~~~~~~~~~~\text{or }e_{n+j-1}\in\mathbb{U}_{\mathcal{N}}~\wedge~(j-1)\in\Gamma\setminus(F_1\cup F_2)\\
%\left(\begin{array}{c} c_{j-1}+c_j\\ c_{j-1}+c_j+b_j^3 \end{array}\right) &\text{if }j\notin F_1\cup F_2\text{ and }e_{n+j-1}\notin\mathbb{E}_{\mathcal{N}}\\
%&~~~~~~~~~~~~~~~~~~~\text{and }e_{n+j-1}\notin\mathbb{O}_{\mathcal{N}}~\vee~(j-1)\notin F_1\cup F_2\\
%&~~~~~~~~~~~~~~~~~~~\text{and }e_{n+j-1}\notin\mathbb{U}_{\mathcal{N}}~\vee~(j-1)\notin\Gamma\setminus(F_1\cup F_2)\\
%\left(\begin{array}{c} b_j^1\\ c_j \end{array}\right) &\text{if }e_{n+j}\in\mathbb{O}_{\mathcal{N}},~j\in F_1\\
%\left(\begin{array}{c} c_j\\ b_j^2 \end{array}\right) &\text{if }e_{n+j}\in\mathbb{O}_{\mathcal{N}},~j\in F_2
%\end{cases}
%\end{align*}
and $q_{G(b,c)}q_{H(b,c)}\in\mathcal{G}(n,F,b)$. Then, we call $q_{G(b,c)}q_{H(b,c)}$ the \emph{basic monomial} of $\mathcal{G}(n,F,b)$ and~$c$.
\end{definition}

\noindent We continue our example for binomial $f_2$ to illustrate Definition~\ref{def:basicMon}. For $f_2\in\mathcal{G}(4,F,b)$, recall that $F=\{\{2,3\},\emptyset\}$ and $b=\{(0,0),(0,1)\}$. Then, for $c=(0,1,1,0)$,
\begin{align*}
G^1(b,c)&=\left(\begin{array}{ccc} c_1 & b_2^1 & b_3^1\\ c_1+b_1^3 & c_2 & c_3\end{array}\right)=\left(\begin{array}{ccc} 0 & 0 & 0 \\ 0 & 1 & 1 \end{array}\right)\\
G^2(b,c)&=\left(\begin{array}{ccc} 0 & 0 & 0 \\ 0 & 0 & c_2 \end{array}\right)=\left(\begin{array}{ccc} 0 & 0 & 0 \\ 0 & 0 & 1 \end{array}\right)
\end{align*}
Furthermore, for $d=(0,0,1,0)$,
\begin{align*}
G^1(b,d)&=\left(\begin{array}{ccc} 0 & 0 & 0 \\ 0 & 0 & 1 \end{array}\right),~~~G^2(b,c)=\left(\begin{array}{ccc} 0 & 0 & 0 \\ 0 & 0 & 0 \end{array}\right).
\end{align*}
This means, $f_2$ is the difference of basic monomials:
\begin{align*}
f_2=q_{G(b,c)}q_{H(b,c)}-q_{G(b,d)}q_{H(b,d)}.
\end{align*}

The second ingredient to construct all quadratic phylogenetic invariants of $\mathcal{N}$ is a strengthening of Proposition~\ref{prop::dim}:
\begin{proposition}\label{prop::dim2}
Let $n\geq 4$ and let $S(\mathcal{N})$ be the sunlet network of $\Gamma$. Let $F_1,F_2\subseteq\Gamma$ with $F_1\cap F_2=\emptyset$, $b^1\in(\mathbb{Z}/2\mathbb{Z})^{F_1}$, $b^2\in(\mathbb{Z}/2\mathbb{Z})^{F_2}$, $b^3\in(\mathbb{Z}/2\mathbb{Z})^{\Gamma\setminus(F_1\cup F_{2})}$ such that $|\Gamma\setminus F_i|$ is even for $i\in\{1,2\}$. Then, $$I_{\mathcal{N}}\cap\mathcal{G}(n,F,b)\neq\emptyset$$ if and only if $b_1^3=0$, $1\notin F_1\cup F_2$ and
\begin{align*}
\left|\{e_{n+j}\in\mathbb{U}_{\mathcal{N}}\,:\,j\notin F_1~\vee~(j-1)\notin F_1,j\in F_1~\vee~j\notin F_2~\vee~(j-1)\notin F_2,j\in F_2\}\right|>0.
\end{align*}
\end{proposition}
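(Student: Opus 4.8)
The plan is to run everything off Proposition~\ref{prop::IG}, which already rewrites $I_{\mathcal{N}}\cap\mathcal{G}(n,F,b)$ as an intersection of kernels of the maps $M_i^{F,b}$. Under the standing hypotheses $F_1\cap F_2=\emptyset$ and $|\Gamma\setminus F_i|$ even, only the last five branches of that case distinction can occur (the alternatives requiring $1\in F_1\cap F_2$ are vacuous), and Proposition~\ref{prop::dim} gives $\dim\mathcal{G}(n,F,b)=2^{n-2}>0$. I would also record at the outset that, since $F_1\cap F_2=\emptyset$, every index $j$ lies outside $F_1$ or outside $F_2$, so the set displayed in the statement is all of $\mathbb{U}_{\mathcal{N}}$ and the cardinality condition is simply $\mathbb{U}_{\mathcal{N}}\neq\emptyset$. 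The statement then splits into two parts: (a) in the branches where $1\in F_1$, $1\notin F_2$, $b_1^1=0$ (or the mirror case), or $1\notin F_1\cup F_2$, $b_1^3=1$, or $1\in F_1\setminus F_2$ with $b_1^1=1$ (or the mirror case), the prescribed intersection of kernels equals $\{0\}$; and (b) in the branch $1\notin F_1\cup F_2$, $b_1^3=0$, the intersection $\ker\!\big(M_1^{F,b}(\mathbb{E}_{\mathcal{N}})\big)\cap\ker\!\big(M_2^{F,b}(\mathbb{O}_{\mathcal{N}})\big)$ is nonzero exactly when $\mathbb{U}_{\mathcal{N}}\neq\emptyset$.

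For the triviality claims in (a) I would argue uniformly by injectivity. Each $M_i^{F,b}$ sends a spanning monomial $q_Gq_H$ with $G\in L(n,F,b)$ to a single standard basis vector whose index is the concatenation of $c_{r,s}^{=}$, $c_{r,s}^{\neq}$, $c_{r,s}^{\sim}$ on the three parts of the edge partition; hence the intersection of the relevant kernels is $\{0\}$ as soon as the tuple of these $c$-matrices (over the $M_i$ occurring in the given branch) determines $(G,H)$ among valid pairs. I would prove this by a direct decoding: the $c^{=}$-blocks return $\sum_{l=r}^jG_{1l}$ and $\sum_{l=s}^jH_{2l}$ outright; a second map with a different $(r,s)$, together with the $c^{\neq}$-blocks and the $c^{\sim}$-blocks (the latter read via Lemma~\ref{lem:colors}), then recovers all of $\sum_l G_{\cdot l}$ and $\sum_l H_{\cdot l}$ along every internal edge $e_{n+j}$, $2\le j\le n-1$; differencing consecutive partial sums yields the columns $G_{\cdot j},H_{\cdot j}$ for $2\le j\le n-1$; and the two boundary columns are pinned down by the glove relations together with the lexicographic normalisation $G\in L(n,F,b)$, which fixes the first position at which $G$ and $H$ differ. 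In the branches where $M_2$ reads a $\mathbb{U}_{\mathcal{N}}^i$-block (so $1\in F_i$, $b_1^i=0$), the extra input is that $1\in F_i$ makes $G_{i1}=H_{i1}=b_1^i$ known, which starts the row-$i$ recursion; the rest of the decoding goes through verbatim. So the combined map is injective on $\mathcal{G}(n,F,b)$ and the intersection is $\{0\}$.

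For case (b), if $\mathbb{U}_{\mathcal{N}}=\emptyset$ then $\mathbb{E}_{\mathcal{N}}\cup\mathbb{O}_{\mathcal{N}}$ is the whole set of internal edges, and combining the $c^{=}$- and $c^{\neq}$-blocks of $M_1^{F,b}(\mathbb{E}_{\mathcal{N}})$ with those of $M_2^{F,b}(\mathbb{O}_{\mathcal{N}})$ recovers $\sum_{l=1}^jG_{1l}$, $\sum_{l=1}^jH_{2l}$ and $\sum_{l=2}^jH_{2l}$ at every internal edge; since $1\notin F_1$ forces $G_{11}=0$ (lex), hence $G_{21}=0$ (the column relation with $b_1^3=0$), the decoding closes and the intersection is $\{0\}$. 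If instead $\mathbb{U}_{\mathcal{N}}\neq\emptyset$, I would fix an internal edge $e_{n+j_0}\in\mathbb{U}_{\mathcal{N}}$ and exhibit a nonzero element as a difference $q_{G(b,c)}q_{H(b,c)}-q_{G(b,d)}q_{H(b,d)}$ of basic monomials (Definition~\ref{def:basicMon}), with $d$ obtained from $c$ by flipping a single coordinate occurring in the $G^2$-branch(es) that $e_{n+j_0}\in\mathbb{U}_{\mathcal{N}}$ activates (keeping $c_1=c_n=0$). By the design of $G^2$, this flip toggles exactly two entries of $G(b,c)$, lying in one row and in two consecutive columns among $j_0,j_0\pm1$, so $G(b,c)\neq G(b,d)$ and the two monomials are distinct basis vectors; yet the induced change in the partial sums $\sum_l G_{\cdot l}$ is a single toggle occurring only at the index of the $\mathbb{U}_{\mathcal{N}}$-edge $e_{n+j_0}$, hence invisible to the $c^{=}/c^{\neq}$-blocks on $\mathbb{E}_{\mathcal{N}}$ and $\mathbb{O}_{\mathcal{N}}$. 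Finally, by Lemma~\ref{lem:colors} the $c^{\sim}$-value at a $\mathbb{U}_{\mathcal{N}}$-edge depends only on the coordinate of $(v(G,j),v(H,j))$ on which the two vectors agree, together with the entry $\sum_l G_{1l}+\sum_l H_{1l}$, and the flip leaves both unchanged; so $M_1^{F,b}(\mathbb{E}_{\mathcal{N}})$ and $M_2^{F,b}(\mathbb{O}_{\mathcal{N}})$ send the two basic monomials to the same basis vector, and their difference is a nonzero element of $I_{\mathcal{N}}\cap\mathcal{G}(n,F,b)$. This is exactly the mechanism of the worked example $f_2$, where $j_0=2$.

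The main obstacle is the forward construction just sketched: showing that for an arbitrary $\mathbb{U}_{\mathcal{N}}$-edge there is a legal single-coordinate flip of $c$ that genuinely changes the basic monomial while keeping all three $c$-matrices fixed. This needs a careful case split according to which of the four $G^2$-branches is active at $e_{n+j_0}$ — i.e.\ whether $e_{n+j_0}\in\mathbb{U}_{\mathcal{N}}^1$ or $\mathbb{U}_{\mathcal{N}}^2$ and whether $j_0$ and $j_0\pm1$ lie in $F_1$, $F_2$, or neither — together with bookkeeping of how the toggled bits propagate through the partial sums and through the constraint $G(b,c)_{\cdot n}=\sum_{j<n}G(b,c)_{\cdot j}$. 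The decoding arguments in (a) and in the $\mathbb{U}_{\mathcal{N}}=\emptyset$ part of (b) are conceptually routine but also require attention, essentially amounting to verifying that the $c$-data together with the glove relations form a triangular linear system for the entries of $G$ and $H$.
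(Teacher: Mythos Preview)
Your approach is genuinely different from the paper's. The paper does \emph{not} argue injectivity or build an explicit invariant; instead it computes the rank of the direct sum $M_1^{F,b}(\mathbb{E}_{\mathcal{N}})\oplus M_2^{F,b}(S_1)$ by counting degrees of freedom in the matrices $c_{r,s}^{=},c_{r,s}^{\neq},c_{r,s}^{\sim}$ (Equations~(3.2)--(3.8) in the proof), and then applies rank--nullity together with Proposition~\ref{prop::dim}. This single computation handles all branches at once and, crucially, yields the exact dimension $2^{n-2}-2^{|\mathbb{E}_{\mathcal{N}}|+|\mathbb{O}_{\mathcal{N}}|+\hat d_{\mathbb{U}}}$, a number the paper reuses in the proof of Theorem~\ref{thm::basis}. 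Your plan trades that dimension formula for two separate arguments: a decoding proof of injectivity in the ``empty'' branches, and an explicit construction in the $b_1^3=0$, $1\notin F_1\cup F_2$ branch. The decoding is morally the same information as the rank count, and your forward construction is essentially the engine of Theorem~\ref{thm::basis} pulled forward; so your organisation collapses Proposition~\ref{prop::dim2} and the existence half of Theorem~\ref{thm::basis} into one argument. That is fine logically, but you lose the dimension formula, which you would then have to recover by other means to finish Theorem~\ref{thm::basis}.

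There is a gap in your forward construction. You propose to flip ``a single coordinate occurring in the $G^2$-branch(es) that $e_{n+j_0}\in\mathbb{U}_{\mathcal{N}}$ activates''. But inspecting Definition~\ref{def:basicMon}, the $G^2$-branches at $j_0$ are only activated when $j_0$ (or $j_0\pm1$) lies in $F_1$ or $F_2$ in the specified pattern; for an edge $e_{n+j_0}\in\mathbb{U}_{\mathcal{N}}$ with, say, $j_0\notin F_1\cup F_2$ and neighbours not in the right sets, no $G^2$-branch fires and your described flip does not exist. What actually controls whether a flip works is whether $j_0$ lies in the complement of the index set $R$ of Theorem~\ref{thm::basis}: a flip at $j_0\in\Gamma\setminus R$ leaves the $c^{=}/c^{\neq}/c^{\sim}$ data unchanged, while a flip at $j_0\in R$ does not. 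So your construction goes through precisely when $R\subsetneq\{2,\dots,n-1\}$, i.e.\ when $\hat d_{\mathbb{U}}<|\mathbb{U}_{\mathcal{N}}|$ in the paper's notation --- which is exactly the non-emptiness criterion the paper's rank computation produces. Your observation that the displayed set literally reduces to $\mathbb{U}_{\mathcal{N}}$ under $F_1\cap F_2=\emptyset$ is correct as written, but you should not rely on that simplification to sidestep the case split: the honest statement you need for the construction is $|R|<n-2$, and you must argue that this follows from $\mathbb{U}_{\mathcal{N}}\neq\emptyset$ (or whatever the intended reading of the displayed condition is) before the flip argument can proceed.
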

\begin{proof}
First, assume
\begin{align}\label{prop::IG::apply}
I_{\mathcal{N}}\cap\mathcal{G}(n,F,b)=\,\text{ker}\left(M_1^{F,b}(\mathbb{E}_{\mathcal{N}})\right)\cap\,\text{ker}\left(M_2^{F,b}(S_1)\right)
\end{align}
for some $S_1\in\{\mathbb{O}_{\mathcal{N}},\mathbb{U}_{\mathcal{N}}^1,\mathbb{U}_{\mathcal{N}}^2\}$. 
%First, Proposition~\ref{prop::IG} yields
%To simplify our notation, for $i\in\{1,2\}$, $j\in\{2,\dots,n-1\}$, let $I_{ij}=\{1,\dots,j\}\cap F_i$, $I_{j}=\{1,\dots,j\}\setminus(F_1\cup F_2)$, $B_{ij}=\sum_{l\in I_{ij}}b_l^i$, $B_{3j}=\sum_{l\in I_{j}}b_l^3$ and $B_{j}=\sum_{l=1}^3B_{l}^{j}$. Then, 
Then, for $q_Gq_H\in\mathcal{G}(n,F,b)$, we have
\begin{align}
c_{1,1}^=(\mathbb{E}_{\mathcal{N}},G)&=\left[\begin{array}{c}\sum_{l=1}^jG_{1l}\\ \sum_{l=1}^jG_{2l}\end{array}\right]_{e_{n+j}\in\mathbb{E}_{\mathcal{N}}},\label{M1::1}\\
%&=\left[\begin{array}{c}B_{1j}+\sum_{l\in I_{2j}\cup I_{j}}G_{1l}\\ \sum_{l\in I_{1j}}G_{2l}+B_{2j}+\sum_{l\in I_{j}}G_{1l}+B_{3j}\end{array}\right]_{e_{n+j}\in\mathbb{E}_{\mathcal{N}}},\label{M1::1}\\
c_{1,1}^{\neq}(\mathbb{O}_{\mathcal{N}},G)&=\left[\begin{array}{c}\sum_{l=1}^j\left(G_{1l}+H_{2l}\right)\\ 0\end{array}\right]_{e_{n+j}\in\mathbb{O}_{\mathcal{N}}}\nonumber\\
%&=\left[\begin{array}{c}\sum_{l\in I_{2j}}\left(G_{1l}+G_{2l}\right)+\sum_{l\in\{1,\dots,j\}\setminus F_2}\left(G_{1l}+G_{2l}+1\right)\\ 0\end{array}\right]_{e_{n+j}\in\mathbb{O}_{\mathcal{N}}}\nonumber\\
&=\left[\begin{array}{c}\sum_{l=1}^j(G_{1l}+G_{2l})+|\{1,\dots,j\}\setminus F_2|\\ 0\end{array}\right]_{e_{n+j}\in\mathbb{O}_{\mathcal{N}}},\label{M1::2}
\end{align}
and, for $e_{n+j}\in\mathbb{U}_{\mathcal{N}}$,
\begin{align}
c_{1,1}^{\sim}(\mathbb{U}_{\mathcal{N}},G)_{\cdot j}&=\begin{cases}
\left(\begin{array}{c}|\{1,\dots,j\}\setminus F_1|\\ \sum_{l=1}^jG_{1l}\end{array}\right) &\text{if }\sum\limits_{l=1}^j\left(G_{1l}+G_{2l}\right)=0,\\
\left(\begin{array}{c}|\{1,\dots,j\}\setminus F_1|\\ \sum_{l=1}^jG_{1l}+|\{1,\dots,j\}\setminus F_2|+1\end{array}\right) &\text{otherwise.}
\end{cases}\label{M1::3}
\end{align}
and equivalently,
\begin{align}
c_{1,1}^{\sim}(\mathbb{U}_{\mathcal{N}},G)_{\cdot j}&=\begin{cases}
\left(\begin{array}{c}|\{1,\dots,j\}\setminus F_1|\\ \sum_{l=1}^jH_{2l}\end{array}\right) &\text{if }\sum\limits_{l=1}^j\left(H_{1l}+H_{2l}\right)=0,\\
\left(\begin{array}{c}|\{1,\dots,j\}\setminus F_1|\\ \sum_{l=1}^jH_{2l}+|\{1,\dots,j\}\setminus F_1|+1\end{array}\right) &\text{otherwise.}
\end{cases}\label{M1::4}
\end{align}
Similarly, for 
\begin{align*}
S_2&=\begin{cases}
\mathbb{E}_{\mathcal{N}} &\text{if }S_1=\mathbb{O}_{\mathcal{N}},\\
\mathbb{U}_{\mathcal{N}}^2 &\text{if }S_1=\mathbb{U}_{\mathcal{N}}^1,\\
\mathbb{U}_{\mathcal{N}}^1 &\text{if }S_1=\mathbb{U}_{\mathcal{N}}^2,
\end{cases}~~~&~~~S_3&=\begin{cases}
\mathbb{U}_{\mathcal{N}} &\text{if }S_1=\mathbb{O}_{\mathcal{N}},\\
\mathbb{E}_{\mathcal{N}}\cup\mathbb{O}_{\mathcal{N}} &\text{otherwise},
\end{cases}
\end{align*}
we have
\begin{align}
c_{1,2}^=(S_1,G)&=\left[\begin{array}{c}\sum_{l=1}^jG_{1l}\\ \sum_{l=2}^jG_{2l}\end{array}\right]_{e_{n+j}\in S_1},\label{M2::1}\\
%&=\left[\begin{array}{c}B_{1j}+\sum_{l\in I_{2j}\cup I_{j}}G_{1l}\\ G_{21}+\sum_{l\in I_{1j}}G_{2l}+B_{2j}+\sum_{l\in I_{j}}G_{1l}+B_{3j}\end{array}\right]_{e_{n+j}\in\mathbb{O}_{\mathcal{N}}},
c_{1,2}^{\neq}(S_2,G)%&=\left[\begin{array}{c}\sum_{l=1}^jG_{1l}+\sum_{l=2}^jH_{2l}\\ 0\end{array}\right]_{e_{n+j}\in\mathbb{E}_{\mathcal{N}}}\nonumber\\
&=\left[\begin{array}{c}H_{21}+\sum_{l=1}^j(G_{1l}+G_{2l})+|\{1,\dots,j\}\setminus F_2|\\ 0\end{array}\right]_{e_{n+j}\in S_2},\label{M2::2}
\end{align}
and, for $e_{n+j}\in S_3$,
\begin{align}
c_{1,2}^{\sim}(S_3,G)_{\cdot j}&=\begin{cases}
\left(\begin{array}{c}G_{11}+|\{2,\dots,j\}\setminus F_1|\\ \sum_{l=1}^jG_{1l}\end{array}\right) &\text{if }\sum\limits_{l=1}^j\left(G_{1l}+G_{2l}\right)=0,\\
\left(\begin{array}{c}G_{11}+|\{2,\dots,j\}\setminus F_1|\\ \sum_{l=2}^jG_{1l}+|\{2,\dots,j\}\setminus F_2|+1\end{array}\right) &\text{otherwise.}
\end{cases}\label{M2::3}
\end{align}
%Finally, for $i\in\{1,2\}$,
%\begin{align}
%&c_{2,i}^=(\mathbb{E}_{\mathcal{N}},G)\oplus c_{2,i}^{\neq}(\mathbb{O}_{\mathcal{N}},G)=c_{1,i}^=(\mathbb{E}_{\mathcal{N}},G)\oplus c_{1,i}^{\neq}(\mathbb{O}_{\mathcal{N}},G)+\left[\begin{array}{ccc}G_{11} & \dots & G_{11}\\ 0 & \dots & 0\end{array}\right],\label{M2+i::1}
%\end{align}
%and, for $e_{n+j}\in\mathbb{U}_{\mathcal{N}}$,
%\begin{align}
%&c_{2,i}^{\sim}(\mathbb{U}_{\mathcal{N}},G)_{\cdot j}=c_{1,i}^{\sim}(\mathbb{U}_{\mathcal{N}},G)_{\cdot j}+\begin{cases}\left(\begin{array}{c}G_{11}\\ G_{21}\end{array}\right) &\text{if }\sum\limits_{l=1}^j(G_{1l}+G_{2l})=0,\\
%\left(\begin{array}{c}G_{11}\\ 0\end{array}\right) &\text{otherwise.}
%\end{cases}\label{M2+i::2}
%\end{align}
Since $F_1\cap F_2=\emptyset$, the vector $G_{\cdot j}$ is defined by exactly one free choice in its entries for all $G\in L(n,F,b)$, $j\in\{2,\dots,n-1\}$. Hence, from Equations~\eqref{M1::1} to~\eqref{M1::3} we know that matrices $c_{1,1}^{=}(\mathbb{E}_{\mathcal{N}},G)$, $c_{1,1}^{\neq}(\mathbb{O}_{\mathcal{N}},G)$, $c_{1,1}^{\sim}(\mathbb{U}_{\mathcal{N}},G)$ have at least
\begin{align*}
d_{\mathbb{E}}=|\mathbb{E}_{\mathcal{N}}|,~d_{\mathbb{O}}=|\{e_{n+j}\in\mathbb{O}_{\mathcal{N}}\,:\,j\in F_1\cup F_2\}|,~d_{\mathbb{U}}=|\{e_{n+j}\in\mathbb{U}_{\mathcal{N}}\,:\,j\in\Gamma\setminus(F_1\cup F_2)\}|
\end{align*}
degrees of freedom, respectively. Here we used the fact that $G_{1j}+G_{2j}$ is not constant only if $j\in F_1\cup F_2$, and that $G_{1j}$ is a linear function of $G_{2j}$ in the definition of $c_{1,1}^{\sim}(\mathbb{U}_{\mathcal{N}},G)_{\cdot j}$. Furthermore, for $e_{n+j}\in\mathbb{U}_{\mathcal{N}}$ with $(j-1)\notin F_2$ and $j\in F_2$ we observe that column~\eqref{M1::3} is independent of the choice of $G_{\cdot j}$. Analogously, for $(j-1)\notin F_1$ and $j\in F_1$, column~\eqref{M1::4} is independent of the choice of $G_{\cdot j}$. This means, we conclude that matrix $c_{1,1}^{\sim}(\mathbb{U}_{\mathcal{N}},G)$ has
\begin{align*}
\hat{d}_{\mathbb{U}}=d_{\mathbb{U}}+|\{e_{n+j}\in\mathbb{U}_{\mathcal{N}}\,:\,(j-1)\notin F_1,j\in F_1~\vee~(j-1)\notin F_2,j\in F_2\}|
\end{align*}
degrees of freedom. Therefore,
\begin{align*}
\text{rank}\left(M_1^{F,b}(\mathbb{E}_{\mathcal{N}})\right)=2^{d_{\mathbb{E}}+d_{\mathbb{O}}+\hat{d}_{\mathbb{U}}}.
\end{align*}
Analogously, matrices $c_{1,2}^{=}(S_1,G)$, $c_{1,2}^{\neq}(S_2,G)$ and $c_{1,2}^{\sim}(S_3,G)$ have 
\begin{align*}
d_{1}&=|S_1|,~~d_{2}=|\{e_{n+j}\in S_2\,:\,j\in F_1\cup F_2\}|\\
\text{and}~d_{3}&=|\{e_{n+j}\in S_3\,:\,j\in\Gamma\setminus(F_1\cup F_2)~\vee~(j-1)\notin F_1,j\in F_1~\vee~(j-1)\notin F_2,j\in F_2\}|
\end{align*}
degrees of freedom, respectively, i.e.,
\begin{align*}
\text{rank}\left(M_2^{F,b}(S_1)\right)=2^{d_{1}+d_{2}+d_{3}}.
\end{align*}
From these observations we can conclude that
\begin{align*}
\text{rank}\left(M_1^{F,b}(\mathbb{E}_{\mathcal{N}})\oplus M_2^{F,b}(S_1)\right)&=\begin{cases}
2^{|\mathbb{E}_{\mathcal{N}}|+|\mathbb{O}_{\mathcal{N}}|+\hat{d}_{\mathbb{U}}} &\text{if }S_1=\mathbb{O}_{\mathcal{N}},\\
2^{|\mathbb{E}_{\mathcal{N}}|+|\mathbb{O}_{\mathcal{N}}|+|\mathbb{U}_{\mathcal{N}}|} &\text{otherwise.}
\end{cases}
\end{align*}
This means, by Equation~\eqref{prop::IG::apply}, Proposition~\ref{prop::dim} and the rank-nullity theorem, we arrive at
\begin{align*}
\text{dim}\left(\text{ker}\left(M_1^{F,b}(\mathbb{E}_{\mathcal{N}})\oplus M_2^{F,b}(S_1)\right)\right)&=\,\text{dim}(\mathcal{G}(n,F,b))-\,\text{rank}\left(M_1^{F,b}(\mathbb{E}_{\mathcal{N}})\oplus M_2^{F,b}(S_1)\right)\\
&=\begin{cases}
2^{n-2}-2^{|\mathbb{E}_{\mathcal{N}}|+|\mathbb{O}_{\mathcal{N}}|+\hat{d}_{\mathbb{U}}} &\text{if }S_1=\mathbb{O}_{\mathcal{N}},\\
0 &\text{otherwise.}
\end{cases}
\end{align*}
Analogously, $\text{dim}\left(\text{ker}\left(\mathbb{E}_{\mathcal{N}},\mathbb{U}_{\mathcal{N}}^2,\mathbb{U}_{\mathcal{N}}^1,\mathbb{O}_{\mathcal{N}}\right)\right)=0$ and $\text{dim}\left(\text{ker}\left(\mathbb{E}_{\mathcal{N}},\mathbb{O}_{\mathcal{N}},\mathbb{U}_{\mathcal{N}}^1,\mathbb{U}_{\mathcal{N}}^2\right)\right)=0$. Thus, our claim follows from Proposition~\ref{prop::IG}.
\end{proof}

Now, using basic monomials and Propositions~\ref{prop::IG} and~\ref{prop::dim2} we are ready to construct a basis for all non-empty partitioned ideals $I_{\mathcal{N}}\cap\mathcal{G}(n,F,b)$:
\begin{theorem}\label{thm::basis}
Let $n\geq 4$ and let $S(\mathcal{N})$ be the sunlet network of $\Gamma$. Let $F_1,F_2\subseteq\Gamma$ with $F_1\cap F_2=\emptyset$, $|\Gamma\setminus F_i|$ even for $i\in\{1,2\}$ and $1\notin F_1\cup F_2$, $b^1\in(\mathbb{Z}/2\mathbb{Z})^{F_1}$, $b^2\in(\mathbb{Z}/2\mathbb{Z})^{F_2}$, $b^3\in(\mathbb{Z}/2\mathbb{Z})^{\Gamma\setminus(F_1\cup F_{2})}$ with $b_1^3=0$. Let
\begin{align*}
R=\{j\,:\,e_{n+j}\in\mathbb{E}_{\mathcal{N}}\cup\mathbb{O}_{\mathcal{N}}&~\vee~e_{n+j}\in\mathbb{U}_{\mathcal{N}},\,j\in\Gamma\setminus(F_1\cup F_2)\\
&~\vee~e_{n+j}\in\mathbb{U}_{\mathcal{N}},\,(j-1)\notin F_1,j\in F_1\\
&~\vee~e_{n+j}\in\mathbb{U}_{\mathcal{N}},\,(j-1)\notin F_2,j\in F_2\}\subseteq\Gamma.
\end{align*}
and let $c,d\in(\mathbb{Z}/2\mathbb{Z})^{n}$ with $c_1=c_n=d_1=d_n=0$, $c_j=d_j$ for all $j\in R$ and $c_j=0$ for all $j\in\Gamma\setminus R$. Let $f_{b,c,d}$ denote the difference of the basic monomials of $\mathcal{G}(n,F,b)$ and $c$ and $d$, respectively. Then, $f_{b,c,d}$ is a phylogenetic invariants of $\mathcal{N}$. Furthermore, let $Z$ denote the set of tuples $(c,d)$ defining $f_{b,c,d}$. If
\begin{align*}
\mathcal{B}(n,F,b)&=\left\{f_{b,c,d}\,:\,(c,d)\in Z,~c\neq d\right\}
\end{align*}
is non-empty, then $\mathcal{B}(n,F,b)$ is a basis for $I_\mathcal{N}\cap\mathcal{G}(n,F,b)$.
\end{theorem}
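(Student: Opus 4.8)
The plan is to prove the two assertions of the theorem separately: that each $f_{b,c,d}$ lies in $I_\mathcal{N}\cap\mathcal{G}(n,F,b)$, and that $\mathcal{B}(n,F,b)$ is a linearly independent spanning set of that space. For membership, since $1\notin F_1\cup F_2$ and $b_1^3=0$, Proposition~\ref{prop::IG} gives $I_\mathcal{N}\cap\mathcal{G}(n,F,b)=\mathrm{ker}(M_1^{F,b}(\mathbb{E}_\mathcal{N}))\cap\mathrm{ker}(M_2^{F,b}(\mathbb{O}_\mathcal{N}))$. Because $f_{b,c,d}$ is the difference of the two basic monomials $q_{G(b,c)}q_{H(b,c)}$ and $q_{G(b,d)}q_{H(b,d)}$ and each $M_i^{F,b}$ maps a monomial to a single standard basis vector, it is enough to show that these two basic monomials have equal image under $M_1^{F,b}(\mathbb{E}_\mathcal{N})$ and equal image under $M_2^{F,b}(\mathbb{O}_\mathcal{N})$. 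I would substitute $G=G(b,c)$ into the explicit column formulas \eqref{M1::1}--\eqref{M1::3} and \eqref{M2::1}--\eqref{M2::3} from the proof of Proposition~\ref{prop::dim2}, unwind Definition~\ref{def:basicMon}, and check that each of the six indexing matrices $c_{1,1}^{=}(\mathbb{E}_\mathcal{N},G(b,c))$, $c_{1,1}^{\neq}(\mathbb{O}_\mathcal{N},G(b,c))$, $c_{1,1}^{\sim}(\mathbb{U}_\mathcal{N},G(b,c))$, $c_{1,2}^{=}(\mathbb{O}_\mathcal{N},G(b,c))$, $c_{1,2}^{\neq}(\mathbb{E}_\mathcal{N},G(b,c))$, $c_{1,2}^{\sim}(\mathbb{U}_\mathcal{N},G(b,c))$ depends on $c$ only through its restriction $c|_R$, the cancellations built into the correction term $G^2(b,c)$ being exactly what kills the dependence on the coordinates $c_j$ with $j\notin R$. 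Since $c_j=d_j$ for every $j\in R$ by hypothesis, the two basic monomials then have equal image under both maps, so $f_{b,c,d}\in I_\mathcal{N}\cap\mathcal{G}(n,F,b)$; this argument does not use that $\mathcal{B}(n,F,b)$ is non-empty, matching the unconditional first assertion.

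For the basis statement I would compare dimensions. From the proof of Proposition~\ref{prop::dim2} in the case $S_1=\mathbb{O}_\mathcal{N}$, $\dim(I_\mathcal{N}\cap\mathcal{G}(n,F,b))=2^{n-2}-2^{|\mathbb{E}_\mathcal{N}|+|\mathbb{O}_\mathcal{N}|+\hat{d}_\mathbb{U}}$; the four families of indices defining $R$ are pairwise disjoint, so $|R|=|\mathbb{E}_\mathcal{N}|+|\mathbb{O}_\mathcal{N}|+\hat{d}_\mathbb{U}$ and this dimension equals $2^{n-2}-2^{|R|}$. On the other hand, the admissible set $Z$ has $2^{|R|}\cdot 2^{n-2-|R|}=2^{n-2}$ elements, of which exactly $2^{|R|}$ satisfy $c=d$. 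I would then prove the combinatorial fact that distinct values of $c$ occurring in $Z$ give distinct basic monomials $q_{G(b,c)}q_{H(b,c)}$, by a column-by-column inspection of $G(b,c)$ in Definition~\ref{def:basicMon}: the coordinates of $c$ indexed by $R$ are recoverable, after a triangular change of variables, from the columns of $G(b,c)$, while coordinates outside $R$ contribute identically for $Z$-admissible pairs. This yields $|\mathcal{B}(n,F,b)|=2^{n-2}-2^{|R|}=\dim(I_\mathcal{N}\cap\mathcal{G}(n,F,b))$. For linear independence I would group the differences in $\mathcal{B}(n,F,b)$ by the common value of $c|_R=d|_R$: within each block the $2^{n-2-|R|}$ basic monomials are distinct, the differences with the distinguished representative (the one supported on $R$) span exactly the $(2^{n-2-|R|}-1)$-dimensional zero-sum subspace of that block, and distinct blocks involve disjoint sets of monomials, so the whole collection is independent. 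Combined with the membership above, $\mathcal{B}(n,F,b)$ is then a linearly independent subset of $I_\mathcal{N}\cap\mathcal{G}(n,F,b)$ of size equal to its dimension, hence a basis.

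The main obstacle is the computational core of the membership argument: showing that each of the six indexing matrices, evaluated at a basic monomial, factors through $c\mapsto c|_R$. This needs a careful, case-heavy analysis of how the correction term $G^2(b,c)$ interacts with the telescoping partial sums $\sum_{l=r}^{j}G_{il}$ in the definitions of $c_{r,s}^{=}$, $c_{r,s}^{\neq}$ and $c_{r,s}^{\sim}$, split by whether $e_{n+j}$ lies in $\mathbb{E}_\mathcal{N}$, $\mathbb{O}_\mathcal{N}$, $\mathbb{U}_\mathcal{N}^1$ or $\mathbb{U}_\mathcal{N}^2$ and whether $j-1$ and $j$ lie in $F_1$, $F_2$ or neither --- essentially the same casework as in the proof of Proposition~\ref{prop::dim2}, now run against Definition~\ref{def:basicMon}, and precisely what dictates the shape of both $R$ and $G^2(b,c)$. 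The distinctness of basic monomials needed for the dimension count is a secondary point handled by the same column inspection of $G(b,c)$.
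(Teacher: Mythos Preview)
Your proposal is correct and follows essentially the same approach as the paper: use Proposition~\ref{prop::IG} to reduce membership to the two kernel conditions, substitute the basic monomial into the column formulas \eqref{M1::1}--\eqref{M2::3} and verify that the resulting indexing matrices depend on $c$ only through $c|_R$, then match cardinality with the dimension $2^{n-2}-2^{|R|}$ obtained from Proposition~\ref{prop::dim2}. Your treatment is in fact more careful than the paper's, which stops after the cardinality count and does not explicitly argue linear independence or distinctness of the basic monomials; your block decomposition by $c|_R$ fills that gap cleanly.
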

\begin{proof}
Since $b_1^3=0$ and $1\notin F_1\cup F_2$, we know from Proposition~\ref{prop::IG} that
\begin{align*}
I_{\mathcal{N}}\cap\mathcal{G}(n,F,b)=\,\text{ker}\left(M_1^{F,b}(\mathbb{E}_{\mathcal{N}})\right)\cap\,\text{ker}\left(M_2^{F,b}(\mathbb{O}_{\mathcal{N}})\right)
\end{align*}
and we know from the proof of Proposition~\ref{prop::dim2} that
\begin{align*}
\text{dim}\left(I_{\mathcal{N}}\cap\mathcal{G}(n,F,b)\right)=2^{n-2}-2^{|R|}.
\end{align*}
First, consider matrices~\eqref{M1::1} to~\eqref{M1::3}. Define $C\in(\mathbb{Z}/2\mathbb{Z})^{2\times n}$ as follows: for $e_{n+j}\in\mathbb{E}_{\mathcal{N}}$ and $x\in\{c,d\}$, 
\begin{align*}
C_{\cdot j}:=&\,c_{1,1}^{=}(\mathbb{E}_{\mathcal{N}},G(b,x))_{\cdot j}\\
=&\sum_{l=1}^jG(b,x)_{\cdot l}\\
=&\left(\begin{array}{c} 0\\ b_1^3 \end{array}\right)+\sum_{l=2,\,l\notin F_1\cup F_2}^j\left(\begin{array}{c}x_l\\ x_l+b_l^3\end{array}\right)+\sum_{l=2,\,l\in F_1}^j\left(\begin{array}{c}b_l^1\\ x_l\end{array}\right)+\sum_{l=2,\,l\in F_2}^j\left(\begin{array}{c}x_l\\ b_l^2\end{array}\right)\\
&~~+\sum_{l=2,\,e_{n+l-1}\in\mathbb{U}_{\mathcal{N}},\,(l-1),l\in F_1}^j\left(\begin{array}{c}0\\ x_{l-1}\end{array}\right)+\sum_{l=2,\,e_{n+l-1}\in\mathbb{U}_{\mathcal{N}},\,(l-1),l\in F_2}^j\left(\begin{array}{c}x_{l-1}\\ 0\end{array}\right)\\
&~~+\sum_{l=2,\,e_{n+l}\in\mathbb{U}_{\mathcal{N}},\,l\in F_1,\,(l+1)\notin F_1}^j\left(\begin{array}{c}0\\ x_{l}\end{array}\right)+\sum_{l=2,\,e_{n+l}\in\mathbb{U}_{\mathcal{N}},\,l\in F_2,\,(l+1)\notin F_2}^j\left(\begin{array}{c}x_{l}\\ 0\end{array}\right)%+\sum_{l=3,\,e_{n+l-1}\in\mathbb{U}_{\mathcal{N}},\,(l-1)\in F_1}^j\left(\begin{array}{c}0\\ x_{l-1}\end{array}\right)\\
%&~~+\sum_{l=3,\,e_{n+l-1}\in\mathbb{U}_{\mathcal{N}},\,(l-1)\in F_2}^j\left(\begin{array}{c}x_{l-1}\\ 0\end{array}\right)\\
%=&\left(\begin{array}{c} 0\\ b_1^3 \end{array}\right)+\sum_{l=2,\,e_{n+l}\notin\mathbb{U}_{\mathcal{N}}\,\vee\,l\notin F_1\cup F_2}^jG^1(b,x)_{\cdot l}\\
%&~~+\sum_{l=2,\,e_{n+l}\notin\mathbb{U}_{\mathcal{N}},\,e_{n+l-1}\in\mathbb{U}_{\mathcal{N}},\,(l-1)\in F_1}^j\left(\begin{array}{c}0\\ x_{l-1}\end{array}\right)+\sum_{l=2,\,e_{n+l}\notin\mathbb{U}_{\mathcal{N}},\,e_{n+l-1}\in\mathbb{U}_{\mathcal{N}},\,(l-1)\in F_2}^j\left(\begin{array}{c}x_{l-1}\\ 0\end{array}\right)\\
%&~~~+\sum_{l=2,\,e_{n+l}\in\mathbb{O}_{\mathcal{N}},\,l\notin F_1\cup F_2}^j\left(\begin{array}{c}0\\ b_l^3\end{array}\right)+\sum_{l=2,\,e_{n+l}\in\mathbb{O}_{\mathcal{N}},\,l\in F_1\cup F_2}^jG^1(b,x)_{\cdot l}.
%&~~+\sum_{e_{n+l}\in\mathbb{U}_{\mathcal{N}},\,l\notin F_1\cup F_2}G^1(b,x)_{\cdot l}+\sum_{e_{n+l}\in\mathbb{U}_{\mathcal{N}},\,l\in F_1}\left(\begin{array}{c}b_l^1\\ 0\end{array}\right)+\sum_{e_{n+l}\in\mathbb{U}_{\mathcal{N}},\,l\in F_2}\left(\begin{array}{c}0\\ b_l^2\end{array}\right)
\end{align*}
The column vector $C_{\cdot j}$ is well-defined because $d_l=c_l$ for all $e_{n+l}\in\mathbb{E}_{\mathcal{N}}\cup\mathbb{O}_{\mathcal{N}}$ and $e_{n+l}\in\mathbb{U}_{\mathcal{N}}$ with $l\notin F_1\cup F_2$. Indeed, $e_{n+l}\in\mathbb{U}_{\mathcal{N}}$ with $l\in F_1$ implies that either edge $e_{n+l+1}$ contributes $\left(b_{l+1}^1,x_{l+1}+x_l\right)$ or edge $e_{n+l}$ contributes $\left(b_l^1,0\right)$. Hence, $x_l$ does not appear in $C_{\cdot j}$. The same argument works for $F_2$ instead of $F_1$. Analogously,
%\begin{align*}
%c_{1,2}^{\neq}(\mathbb{E}_{\mathcal{N}},G(b^3,x))_{\cdot j}&=\left(\begin{array}{c}
%\sum\limits_{l=1}^jG(b^3,x)_{1l}+\sum\limits_{l=2}^jG(b^3,x)_{2l}+1\\ 0
%\end{array}\right)
%=\left(\begin{array}{c}
%\sum\limits_{l=2}^jb_l^3+1\\ 0
%\end{array}\right)=:D_{\cdot j}.
%\end{align*}
for $e_{n+j}\in\mathbb{O}_{\mathcal{N}}$ and $e_{n+j}\in\mathbb{U}_{\mathcal{N}}$,
\begin{align*}
C_{\cdot j}:=&\,c_{1,1}^{\neq}(\mathbb{O}_{\mathcal{N}},G(b,x))_{\cdot j}~~~\text{and}~~~C_{\cdot j}:=c_{1,1}^{\sim}(\mathbb{U}_{\mathcal{N}},G(b,x))_{\cdot j}
%&=\left(\begin{array}{c}
%\sum\limits_{l=1}^j\left(G(b^3,x)_{1l}+G(b^3,x)_{2l}\right)+1\\ 0
%\end{array}\right)\\
%=&\left(\begin{array}{c}
%\sum\limits_{l=1}^j\left(G(b,x)_{1l}+G(b,x)_{2l}\right)+|\{1,\dots,j\}\setminus F_2|\\ 0
%\end{array}\right)\\
%=&\left(\begin{array}{c}\sum\limits_{i=1}^2\sum\limits_{l=1,\,l\in F_i}^jb_l^i+\sum\limits_{l=1,\,l\notin F_1\cup F_2}^jb_l^3+|\{1,\dots,j\}\setminus F_2|+\sum\limits_{l=2,\,l\in F_1\cup F_2}^jx_l\\ 0\end{array}\right)
%c_{1,2}^{=}(\mathbb{O}_{\mathcal{N}},G(b^3,x))_{\cdot j}&=\sum_{l=1}^jG(b^3,x)_{\cdot l}+\left(\begin{array}{c} 0\\ G(b^3,x)_{\cdot 1}\end{array}\right)=C_{\cdot (j-1)}=:D_{\cdot j}.
\end{align*}
are well-defined because the feasibility of our construction these columns for $C$ does not depend on the edge partition $(\mathbb{E}_{\mathcal{N}},\mathbb{O}_{\mathcal{N}},\mathbb{U}_{\mathcal{N}})$. Therefore, we conclude
\begin{align*}
\left(M_1^{F,b}(\mathbb{E}_{\mathcal{N}})\right)(f_{b,c,d})&=e_{c_{1,1}^{=}(\mathbb{E}_{\mathcal{N}},G(b,c))\oplus c_{1,1}^{\neq}(\mathbb{O}_{\mathcal{N}},G(b,c))\oplus c_{1,1}^{\sim}(\mathbb{U}_{\mathcal{N}},G(b,c))}\\
&~~~-e_{c_{1,1}^{=}(\mathbb{E}_{\mathcal{N}},G(b,d))\oplus c_{1,1}^{\neq}(\mathbb{O}_{\mathcal{N}},G(b,d))\oplus c_{1,1}^{\sim}(\mathbb{U}_{\mathcal{N}},G(b,d))}\\
%&~~~+e_{c_{1,2}^{=}(\mathbb{O}_{\mathcal{N}},G(b^3,c))\oplus c_{1,2}^{\neq}(\mathbb{E}_{\mathcal{N}},G(b^3,c))}-e_{c_{1,2}^{=}(\mathbb{O}_{\mathcal{N}},G(b^3,d))\oplus c_{1,2}^{\neq}(\mathbb{E}_{\mathcal{N}},G(b^3,d))}\\
&=e_{C}-e_C=0.
\end{align*}
We define $D\in(\mathbb{Z}/2\mathbb{Z})^{2\times n}$ similar as $C$ using matrices~\eqref{M2::1} to~\eqref{M2::3} instead of matrices~\eqref{M1::1} to~\eqref{M1::3} to conclude that $f_{b,c,d}\in\,\text{ker}\left(M_2^{F,b}(\mathbb{O}_{\mathcal{N}})\right)$, too. Thus, we conclude that $f_{b,c,d}$ is a phylogenetic invariant of $\mathcal{N}$. 

Since there are $2^{|R|}$ choices for $c$ and $2^{n-2-|R|}$ choices for $d-c$, we conclude that $\mathcal{B}(n,F,b)$ has cardinality 
\begin{align*}
2^{|R|}\left(2^{n-2-|R|}-1\right)=2^{n-2}-2^{|R|}=\,\text{dim}\left(I_{\mathcal{N}}\cap\mathcal{G}(n,F,b)\right).
\end{align*}
\end{proof}

Notice that there might exist many more bases of $I_{\mathcal{N}}\cap\mathcal{G}(n,F,b)$ different from our construction $\mathcal{B}(n,F,b)$. This question becomes interesting when we want to compute a basis for each partitioned ideal $I_{\mathcal{N}}\cap\mathcal{G}(n,F,b)$ in practice. Indeed, we can find a second class of bases very similar to the construction in Theorem~\ref{thm::basis} that allows us to exploit the symmetry of 2-gloves:

\begin{corollary}\label{cor::basis}
Let $n\geq 4$ and let $S(\mathcal{N})$ be the sunlet network of $\Gamma$. Let $F_1,F_2\subseteq\Gamma$ with $F_1\cap F_2=\emptyset$ and $1\notin F_1\cup F_2$, $b^1\in(\mathbb{Z}/2\mathbb{Z})^{F_1}$, $b^2\in(\mathbb{Z}/2\mathbb{Z})^{F_2}$, $b^3\in(\mathbb{Z}/2\mathbb{Z})^{\Gamma\setminus(F_1\cup F_{2})}$ with $b_1^3=0$. Then, exchanging the first and second row of indices $G(b,c)$ and $H(b,c)$ defining basic monomials $q_{G(b,c)}q_{H(b,c)}$, yields a basis for $I_{\mathcal{N}}\cap\mathcal{G}(n,F,b)$ if it exists.
\end{corollary}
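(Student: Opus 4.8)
The plan is to obtain Corollary~\ref{cor::basis} from Theorem~\ref{thm::basis} by transporting the basis $\mathcal{B}(n,F,b)$ along a symmetry of the model, namely the coordinate swap of the Klein four-group.

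First I would introduce the automorphism $\tau:\mathbb{G}\to\mathbb{G}$, $(a,b)\mapsto(b,a)$. Applied coordinatewise it preserves the condition $\sum_{i=1}^n g_i=0$, so it induces a ring automorphism $\tau_*$ of $\mathbb{C}[q_g:\sum_i g_i=0]$ by $q_g\mapsto q_{\tau(g)}$; writing $g_j=(G_{1j},G_{2j})$, the effect of $\tau_*$ on a monomial $q_Gq_H$ is precisely to exchange the first and second rows of both $G$ and $H$. Because the parameterization $\psi_{\mathcal{N}}$ only ever records group elements as subscripts of Fourier parameters, checking on the generators $q_g$ gives $\psi_{\mathcal{N}}\circ\tau_*=\hat\tau\circ\psi_{\mathcal{N}}$, where $\hat\tau$ is the automorphism of the parameter ring $a_h^i\mapsto a_{\tau(h)}^i$. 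As $\hat\tau$ is injective this yields $\tau_*(I_{\mathcal{N}})=I_{\mathcal{N}}$: the phylogenetic ideal of the sunlet network is invariant under exchanging the two rows of all index matrices.

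Next I would trace $\tau_*$ through the combinatorial data. Comparing the defining relations of the $2$-glove shows that $\tau_*$ carries $\mathcal{G}(n,\{F_1,F_2\},\{b^1,b^2,b^3\})$ isomorphically onto the $2$-glove with $(F_1,b^1)$ and $(F_2,b^2)$ interchanged (the row-sum and complementarity conditions are symmetric in the rows, and $G_{1j}+G_{2j}=b_j^3$ is unchanged), and that under this interchange $\mathbb{E}_{\mathcal{N}},\mathbb{O}_{\mathcal{N}}$ are unchanged while $\mathbb{U}_{\mathcal{N}}^1$ and $\mathbb{U}_{\mathcal{N}}^2$ swap. Feeding these facts into Definition~\ref{def:basicMon} I would verify that $\tau_*$ sends the basic monomial of $\mathcal{G}(n,F,b)$ and $c$ to the basic monomial of the interchanged glove and the same $c$: the $G^2$-correction rules attached to $F_1$ and to $F_2$ simply trade places, consistently with $\mathbb{U}_{\mathcal{N}}^1\leftrightarrow\mathbb{U}_{\mathcal{N}}^2$. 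Likewise, from the formulas for $c_{r,s}^{=},c_{r,s}^{\neq},c_{r,s}^{\sim}$ and the maps $M_i^{F,b}$ one sees that the characterization of $I_{\mathcal{N}}\cap\mathcal{G}(n,F,b)$ in Proposition~\ref{prop::IG} is compatible with this swap. Since the hypotheses of Theorem~\ref{thm::basis} are themselves symmetric in $(F_1,b^1)$ and $(F_2,b^2)$, and $\tau$ is an involution, applying $\tau_*$ to the basis $\mathcal{B}(n,F,b)$ produces exactly the family of differences of row-exchanged basic monomials described in the corollary, and this family spans $\tau_*(I_{\mathcal{N}}\cap\mathcal{G}(n,F,b))=I_{\mathcal{N}}\cap\tau_*(\mathcal{G}(n,F,b))$; running the same argument starting from the interchanged glove gives the statement in the form asserted, valid precisely when the relevant ideal is non-empty (which by Proposition~\ref{prop::dim2} is a condition symmetric in the two rows).

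The step I expect to require the most care is the bookkeeping of the previous paragraph: confirming the equivariance $\psi_{\mathcal{N}}\circ\tau_*=\hat\tau\circ\psi_{\mathcal{N}}$ (straightforward once written out on generators), and, more laboriously, checking case by case that the $G^2$-corrections of Definition~\ref{def:basicMon} and the encodings entering Proposition~\ref{prop::IG} behave correctly under the simultaneous exchange of matrix rows and of $\mathbb{U}_{\mathcal{N}}^1,\mathbb{U}_{\mathcal{N}}^2$. A fully self-contained alternative, avoiding the automorphism language, is to repeat the proof of Theorem~\ref{thm::basis} verbatim with the labels $1$ and $2$ of the two matrix rows interchanged everywhere; all the linear-algebra counts are identical and deliver the row-exchanged basis directly.
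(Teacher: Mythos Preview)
Your proposal is correct. The alternative you sketch at the end—rerunning the proof of Theorem~\ref{thm::basis} with the row labels $1$ and $2$ interchanged throughout—is precisely the paper's argument, which consists of the single word ``analogously'' and nothing more. Your primary route via the Klein-four automorphism $\tau:(a,b)\mapsto(b,a)$ is a genuinely different and more structural approach: by establishing $\psi_{\mathcal{N}}\circ\tau_*=\hat\tau\circ\psi_{\mathcal{N}}$ and hence $\tau_*(I_{\mathcal{N}})=I_{\mathcal{N}}$ once, you explain \emph{why} the row-swapped construction works, rather than re-verifying the rank counts from scratch. This buys you a conceptual reason and makes explicit that the row swap carries $\mathcal{G}(n,\{F_1,F_2\},\{b^1,b^2,b^3\})$ to the glove with $(F_1,b^1)\leftrightarrow(F_2,b^2)$, which is exactly how the paper uses the corollary afterwards (``repurpose the basis of a 2-glove to obtain a basis for another distinct 2-glove''). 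The cost is the equivariance bookkeeping you flag; the paper's one-liner avoids that but is correspondingly less informative. Either argument is adequate here.
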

\begin{proof}
From Theorem~\ref{thm::basis} we know that $\mathcal{B}(n,F,b)$ is a basis for $I_{\mathcal{N}}\cap\mathcal{G}(n,F,b)$. Analogously we obtain a basis for $I_{\mathcal{N}}\cap\mathcal{G}(n,F,b)$ by exchanging the first and second row of indices $G(b,c)$ and $H(b,c)$ of basic monomials $q_{G(b,c)}q_{H(b,c)}$ defining $\mathcal{B}(n,F,b)$.
\end{proof}

Corollary~\ref{cor::basis} allows us to repurpose the basis of a 2-glove to obtain a basis for another distinct 2-glove in linear time. Finally, for small positive integers $n$, we can detail Proposition~\ref{prop::dim2} and Theorem~\ref{thm::basis} further:

\begin{corollary}\label{cor::n4empty}
Let $n\leq 4$ be a positive integer and let $S(\mathcal{N})$ be the sunlet network of $\Gamma$. Let $F_1,F_2\subseteq\Gamma$ with $F_1\cap F_2=\emptyset$, $b^1\in(\mathbb{Z}/2\mathbb{Z})^{F_1}$, $b^2\in(\mathbb{Z}/2\mathbb{Z})^{F_2}$, $b^3\in(\mathbb{Z}/2\mathbb{Z})^{\Gamma\setminus(F_1\cup F_{2})}$ such that $|\Gamma\setminus F_i|$ is even for $i\in\{1,2\}$. Then, $I_{\mathcal{N}}\cap\mathcal{G}(n,F,b)=\emptyset$ for $F=\emptyset$, $n\leq 3$ and $n=4$ with $\mathcal{B}(n,F,b)=\emptyset$. Moreover, if $n=4$, $F\neq\emptyset$ and $\mathcal{B}(n,F,b)\neq\emptyset$, then
\begin{align*}
\mathcal{B}(n,F,b)=&\left\{q_{G(b,(0,0,0,0))}q_{H(b,(0,0,0,0))}-q_{G(b,(0,0,1,0))}q_{H(b,(0,0,1,0))},\right.\\
&~\,\left.q_{G(b,(0,1,1,0))}q_{H(b,(0,1,1,0))}-q_{G(b,(0,1,0,0))}q_{H(b,(0,1,0,0))}\right\}
\end{align*}
\end{corollary}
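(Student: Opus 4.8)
The plan is to treat the three assertions of the corollary in turn, in each case specializing the general machinery of Proposition~\ref{prop::dim2} and Theorem~\ref{thm::basis} to the relevant small value of $n$. First I would dispose of the case $F=\emptyset$: here $b=\{b^3\}$ with $b^3\in(\mathbb{Z}/2\mathbb{Z})^{\Gamma}$, and the set $R$ from Theorem~\ref{thm::basis} reduces to $R=\{j\,:\,e_{n+j}\in\mathbb{E}_{\mathcal{N}}\cup\mathbb{O}_{\mathcal{N}}\}$, i.e.\ $R=\{2,\dots,n-1\}$, so $|R|=n-2$. By the dimension formula $\dim(I_{\mathcal{N}}\cap\mathcal{G}(n,F,b))=2^{n-2}-2^{|R|}=0$, hence $I_{\mathcal{N}}\cap\mathcal{G}(n,F,b)=\emptyset$ (also consistent with $\mathcal{B}(n,F,b)$ being empty since $c=d=0$ is forced). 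One should also record the degenerate cases $n\le 3$, where either the network is not well-defined as a sunlet or there are no internal edges $e_{n+2},\dots,e_{n-1}$ to partition, so $\mathbb{U}_{\mathcal{N}}=\emptyset$ and Proposition~\ref{prop::dim2} gives emptiness immediately (for $n=4$ the only internal edges are $e_{6},e_{7}$, which must be handled by the explicit enumeration below).

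Next, for $n=4$ with $F\ne\emptyset$, I would simply enumerate. Because $F_1\cap F_2=\emptyset$, $1\notin F_1\cup F_2$ is required for nonemptiness (Proposition~\ref{prop::dim2}), and $|\Gamma\setminus F_i|$ even forces $|F_1|,|F_2|\in\{0,2\}$ with $4\in F_1\cup F_2$ impossible when $1\notin F_1\cup F_2$ would leave $\{2,3,4\}$ to split — so in fact the only $F\ne\emptyset$ giving a nonempty ideal are $F=\{\{2,3\},\emptyset\}$ (and its mirror), matching the running example $f_2$. In this configuration one computes $\mathbb{E}_{\mathcal{N}}=\emptyset$, $\mathbb{O}_{\mathcal{N}}=\{e_{7}\}$, $\mathbb{U}_{\mathcal{N}}^1=\emptyset$, $\mathbb{U}_{\mathcal{N}}^2=\{e_{6}\}$, so the relevant set is $R=\{j\,:\,e_{4+j}\in\mathbb{E}_{\mathcal{N}}\cup\mathbb{O}_{\mathcal{N}}\ \vee\ e_{4+j}\in\mathbb{U}_{\mathcal{N}},\,(j-1)\notin F_2,\,j\in F_2\}$; checking the clauses gives $j=3$ (from $e_{7}\in\mathbb{O}_{\mathcal{N}}$) and $j=2$ (from $e_{6}\in\mathbb{U}_{\mathcal{N}}$ with $1\notin F_2$, $2\in F_2$? — here $F_2=\{2,3\}$, so indeed $2\in F_2$ and $1\notin F_2$), so $|R|=2=n-2$. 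Then $\dim(I_{\mathcal{N}}\cap\mathcal{G}(4,F,b))=2^{2}-2^{2}=0$?? — that cannot be right given $f_2\ne 0$, so I would recheck: in the running example $f_2$ lies in $\mathcal{G}(4,\{\{2,3\},\emptyset\},\{(0,0),(0,1)\})$, and the displayed partition there is $\mathbb{E}_{\mathcal{N}}=\emptyset$, $\mathbb{O}_{\mathcal{N}}=\{e_{4+3}\}$, $\mathbb{U}_{\mathcal{N}}^1=\emptyset$, $\mathbb{U}_{\mathcal{N}}^2=\{e_{4+2}\}$, and indeed $f_2$ is nonzero. The correct bookkeeping is that for $F_2=\{2,3\}$ the clause "$e_{n+j}\in\mathbb{U}_{\mathcal{N}},\,(j-1)\notin F_2,\,j\in F_2$" at $j=2$ holds, but at the only $\mathbb{U}$-edge we must also check whether $j\in F_1$ or $j\in F_2$ — with $j=2\in F_2$ the edge $e_{6}$ contributes to $R$ only through the $F_2$-clause; meanwhile the $\mathbb{O}$-edge $e_{7}$ at $j=3$ has $3\in F_2$, so it does \emph{not} satisfy "$e_{n+j}\in\mathbb{U}_{\mathcal{N}},\,j\in\Gamma\setminus(F_1\cup F_2)$" but it does satisfy the first clause $e_{n+j}\in\mathbb{E}_{\mathcal{N}}\cup\mathbb{O}_{\mathcal{N}}$, hence $3\in R$. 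So $R=\{2,3\}$ gives $|R|=2$ and dimension $0$, contradicting $f_2\ne 0$; therefore the actual value must be $|R|=1$, the discrepancy being that exactly one of $j=2,j=3$ fails its clause. I would resolve this by carefully re-reading the definition of $R$: the $\mathbb{U}$-clause at $j=2$ requires $(j-1)=1\notin F_2$, which holds, \emph{and} $j=2\in F_2$, which holds, so $2\in R$; and $3\in R$ from the $\mathbb{O}$-clause. Thus $|R|=2$, and the nonemptiness condition of Proposition~\ref{prop::dim2} — which counts $\mathbb{U}$-edges satisfying the flag — is satisfied ($e_{6}$ qualifies), so the ideal is genuinely nonempty with dimension $2^{2}-2^{|R|}$; the only way this is positive is $|R|=1$. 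I would therefore, in the writeup, recompute $R$ directly from the matrix side via Theorem~\ref{thm::basis}'s construction of $C$ and $D$ for $n=4$, which shows that the free coordinates among $c=(0,c_2,c_3,0)$ on which $f_{b,c,d}$ actually depends are a single coordinate, yielding $|Z|$-count $2^{1}(2^{1}-1)=2$ and reproducing exactly the two displayed basis elements (which correspond to $c\in\{(0,0,0,0),(0,1,1,0)\}$ versus $d\in\{(0,0,1,0),(0,1,0,0)\}$).

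The remaining work is then mechanical: for $n=4$ and $F=\{\{2,3\},\emptyset\}$, plug $c\in\{(0,0,0,0),(0,0,1,0),(0,1,0,0),(0,1,1,0)\}$ into Definition~\ref{def:basicMon}, compute the four basic monomials $q_{G(b,c)}q_{H(b,c)}$ explicitly (exactly as done for $c=(0,1,1,0)$ and $d=(0,0,1,0)$ in the running example), check that $c_j=d_j$ on $R$ forces the pairing $(0,0,0,0)\leftrightarrow(0,0,1,0)$ and $(0,1,1,0)\leftrightarrow(0,1,0,0)$, and read off $\mathcal{B}(4,F,b)$. Then invoke Theorem~\ref{thm::basis} to conclude this two-element set is a basis, and invoke Corollary~\ref{cor::basis} (or symmetry of sunlets) to cover the mirror configuration $F=\{\emptyset,\{2,3\}\}$. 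The main obstacle I anticipate is precisely the careful determination of $R$ (equivalently, of which internal edges contribute a free binary parameter) in the $n=4$ case: the several overlapping clauses in the definition of $R$ in Theorem~\ref{thm::basis} interact subtly with the constraint $1\notin F_1\cup F_2$ and with the $\mathbb{U}_{\mathcal{N}}$-flag, and a sign/parity slip there changes $|R|$ and hence whether the corollary's displayed formula has the right cardinality — so I would pin $|R|$ down by directly evaluating the maps $M_1^{F,b}(\mathbb{E}_{\mathcal{N}})$ and $M_2^{F,b}(\mathbb{O}_{\mathcal{N}})$ on the four-dimensional space $\mathcal{G}(4,F,b)$ and verifying the rank is $3$, so the kernel is one-dimensional.
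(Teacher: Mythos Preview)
Your treatment of $F=\emptyset$ and of $n\le 3$ is essentially the paper's: in both cases one invokes Proposition~\ref{prop::dim2} (with Proposition~\ref{prop::dim} for the parity obstruction), noting that $\mathbb{U}_{\mathcal{N}}=\emptyset$ when $F=\emptyset$.

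For the main case $n=4$, $F\neq\emptyset$, there are genuine gaps. First, your enumeration is incomplete: from $F_1\cap F_2=\emptyset$, $1\notin F_1\cup F_2$, and $|\Gamma\setminus F_i|$ even you get $|F_1|,|F_2|\in\{0,2\}$, hence (WLOG $|F_1|=2$, $F_2=\emptyset$) $F_1\in\{\{2,3\},\{2,4\},\{3,4\}\}$; your claim that ``$4\in F_1\cup F_2$ impossible'' is unjustified, and you never treat the other two choices. Second, and more seriously, your determination of $|R|$ never converges: you compute $|R|=2$, observe this gives dimension $0$ in conflict with $f_2$, assert $|R|$ ``must be'' $1$, and finally propose to pin it down by showing that $M_1^{F,b}(\mathbb{E}_{\mathcal{N}})\oplus M_2^{F,b}(\mathbb{O}_{\mathcal{N}})$ has rank $3$, so the kernel is one-dimensional. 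But a one-dimensional kernel is incompatible with the two-element basis in the very statement you are proving: if $\mathcal{B}(4,F,b)$ has two elements then $\dim(I_{\mathcal{N}}\cap\mathcal{G}(4,F,b))=2$ and the rank is $2$, not $3$. As written, the plan does not establish the displayed form of $\mathcal{B}(4,F,b)$.

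The paper's argument avoids this tangle. After reducing via Propositions~\ref{prop::dim} and~\ref{prop::dim2} to $F_1=\{j,k\}$, $j<k$, $F_2=\emptyset$, it first disposes of the degenerate subcase where the $2$-glove is spanned by a single monomial, using only that $\psi_{\mathcal{N}}$ is a ring homomorphism into an integral domain, so $\psi_{\mathcal{N}}(q_Gq_H)=\psi_{\mathcal{N}}(q_G)\psi_{\mathcal{N}}(q_H)\neq 0$ and a lone monomial cannot lie in $I_{\mathcal{N}}$. In the remaining subcase it simply reads off from the definition of $R$ in Theorem~\ref{thm::basis} that $j\in R$ and $k\in\Gamma\setminus R$; together with $c_1=c_4=0$ this forces $c_j=d_j$, $c_k=0$, $d_k=1$ for every $(c,d)\in Z$ with $c\neq d$, which is precisely the displayed pairing. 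No explicit rank computation and no case-by-case listing of $F_1$ is needed.
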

\begin{proof}
For $F=\emptyset$, by definition $|\mathbb{E}_{\mathcal{N}}|=|\mathbb{O}_{\mathcal{N}}|=n/2-1$ and $|\mathbb{U}_{\mathcal{N}}|=0$. Then, $I_{\mathcal{N}}\cap\mathcal{G}(n,F,b)=\emptyset$ follows from Proposition~\ref{prop::dim2}. Otherwise, for $n\leq 3$, Propositions~\ref{prop::dim} and~\ref{prop::dim2} imply $I_{\mathcal{N}}\cap\mathcal{G}(n,F,b)=\emptyset$, too. Therefore, assume $F\neq\emptyset$ and $n=4$. Since $I_{\mathcal{N}}\cap\mathcal{G}(n,F,b)\neq\emptyset$ only if $1\notin F_1\cup F_2$ and we assumed $F_1\cap F_2=\emptyset$, we know from Proposition~\ref{prop::dim} that $|F_i|=2$ and $|F_l|=0$ for $i,l\in\{1,2\}$, $i\neq l$. Without loss of generality $F_1=\{j,k\}$, $j<k$, and let $q_Gq_H\in\mathcal{G}(n,F,b)$. 

First, assume $\mathcal{G}(n,F,b)=\{q_Gq_H\}$. Since $\psi_{\mathcal{N}}$ is a ring homomorphism, $\psi_{\mathcal{N}}(q_Gq_H)=\psi_{\mathcal{N}}(q_G)\psi_{\mathcal{N}}(q_H)$. By definition, $\psi_{\mathcal{N}}(q_G),\psi_{\mathcal{N}}(q_G)\neq 0$ and the co-domain of $\psi_{\mathcal{N}}$ is integral. Thus, $\psi_{\mathcal{N}}(q_G)\psi_{\mathcal{N}}(q_H)\neq 0$. In other words, $q_Gq_H\notin I_{\mathcal{N}}$.

Next, assume $\mathcal{B}(n,F,b)$ is non-empty, i.e., $\text{dim}(\mathcal{G}(n,F,b))>1$. Consider the set $R$ in Theorem~\ref{thm::basis}. By definition, $j\in R$ and $k\in\Gamma\setminus R$. This means, for $(c,d)\in Z$ with $c\neq d$ we have $c_j=d_j$, $c_k=0$ and $d_k=1$. Thus, our claim holds.
%This means, for $\beta = b_j^1+b_k^1$,
%\begin{align*}
%G(b,c)&\in\left\{\left(\begin{array}{cccc} 0 & b_j^1 & b_k^1 & \beta \\ 0 & c_j & c_j & 0 \end{array}\right),~\left(\begin{array}{cccc} 0 & b_j^1 & \beta & b_k^1 \\ 0 & c_j & 0 & 0 \end{array}\right),~\left(\begin{array}{cccc} 0 & \beta & b_j^1 & b_k^1 \\ 0 & 0 & c_j & c_j \end{array}\right)\right\},\\
%G(b,d)&\in\left\{\left(\begin{array}{cccc} 0 & b_j^1 & b_k^1 & \beta \\ 0 & c_j & 1+c_j & 0 \end{array}\right),~\left(\begin{array}{cccc} 0 & b_j^1 & \beta & b_k^1 \\ 0 & c_j & 0 & 1 \end{array}\right),~\left(\begin{array}{cccc} 0 &\beta & b_j^1 & b_k^1 \\ 0 & 0 & c_j & 1+c_j \end{array}\right)\right\}.
%\end{align*}
%
%$(0,c_j,0,0)$, $(0,c_j,1,0)$
%
%for $k=j+1$, $(0,c_j,0+c_j)$ and $(0,c_j,1+c_j)$
%for $k=j+2$, $(0,c_j,0+c_j+c_j)$ and $(0,c_j,1+c_j+c_j)$
%
%$(0,0,0)$ and $(0,0,1)$
%$(0,1,1)$ and $(0,1,0)$
%$(0,1,0)$ and $(0,1,1)$
\end{proof}

\section{Summary and conclusions}\label{sec4}
The previous section gives us an outline of an algorithm to calculate all quadratic phylogenetic invariants of the sunlet network $S(\mathcal{N})$ for any fixed number of taxa $n$. First, consider only 2-gloves with $|F_i|\leq n-4$ even for $i\in\{1,2\}$. Any other 2-glove of $F$ is empty (see Proposition~\ref{prop::dim}). Then, establish the assumption $F_1\cap F_2=\emptyset$ justified by Proposition~\ref{prop::Gdim}, i.e., project 2-glove $\mathcal{G}(n,F,b)$ onto 2-glove $\mathcal{G}(n-|F_3|,F,b)$. Next, calculate the partition $(\mathbb{E}_{\mathcal{N}'},\mathbb{O}_{\mathcal{N}'},\mathbb{U}_{\mathcal{N}'})$ of the interior edges in $\mathcal{N}'$ where $\mathcal{N}'$ is the cycle network on $n-|F_3|$ taxa. Then, apply Theorem~\ref{thm::basis} to obtain a basis for $I_{\mathcal{N}'}\cap\mathcal{G}(n-|F_3|,F,b)$ when possible, i.e. the set $\mathcal{B}(n,F,b)$ in Theorem~\ref{thm::basis} is not empty. Here, the assumptions $1\notin F_1\cup F_2$ and $b_1^3=0$ of Theorem~\ref{thm::basis} are justified by Proposition~\ref{prop::dim2}. This means, if these assumptions are violated, then there exist no quadratic phylogenetic invariants of $\mathcal{N}'$ in $\mathcal{G}(n-|F_3|,F,b)$ which are not also invariants of the tree $\mathcal{T}$ covered in Proposition~\ref{prop::IG}. Finally, lifting the invariants back up from all 2-gloves $\mathcal{G}(n-|F_3|,F,b)$ to 2-gloves $\mathcal{G}(n,F,b)$ and adding the quadratic phylogenetic invariants of the tree $\mathcal{T}$ which are well-known in the literature~\cite{SS05}, we obtain all quadratic phylogenetic invariants in $I_{\mathcal{N}}$. 

Now, our study of the sunlet network tells us some things about the phylogenetic ideal of level-1 networks. First, recall that by taking the toric fiber product of cycle networks and trees we obtain a class of invariants in the phylogenetic ideal of a level-1 network from our computations for the sunlet network~\cite{Sull07,CHM24}. However, we also know from Corollary~\ref{cor::n4empty} that $I_{\mathcal{N}}$ does not contain any quadratic invariants for $n\leq 3$ that capture more structure than that of single trees nested inside~$\mathcal{N}$. In other words, all invariants in sets which are sufficient to prove identifiability under our model are at least cubic for most networks on level-1 quarnets. Indeed, Corollary~\ref{cor::n4empty} can provide generators of the phylogenetic ideal only for one specific quarnet, the cycle network for $n=4$. This means, we cannot follow the methodology of Gross et al.~\cite{GIJ21} to give an alternative proof of the identifiability of level-1 phylogenetic networks under the phylogenetic Markov model using only linear and quadratic invariants of trees and sunlet networks. However, since Theorem~\ref{thm::basis} is applicable for $n\geq 4$, the presence of sufficiently large cycles in a level-1 phylogenetic network enables the study of identifiability with the clear methodological advantage of not relying on a randomized matroid-separation algorithm~\cite{HS21}.

\subsection{Future research}

A first direction to extend our results is the study of level-2 (or more broadly level-$k$) phylogenetic network. Here, the connection between the notions of clique sums and toric fiber products~\cite{SS08} can lead the investigation of more complex graph substructure than sunlet networks. Since our methodology provided a better structural understanding of level-1 networks, this research can avoid the pitfalls previous studies on level-2 networks have encountered~\cite{Ardi21}. Our insights also motivate the study of higher degree phylogenetic invariants. In particular, studying invariants of the phylogenetic Markov model via Pfaffians~\cite{Long20} could make use of the results in this article to characterize higher degree phylogenetic invariants for some tree-child networks in a similar fashion to the work of Cummings et al.~\cite{CGH23}. Another research direction concerns a further generalization of our analysis to random walk $\kappa$-state Markov models with $\kappa =2^k$ for $k\geq 3$. Keeping powers of two for the number of states ensures that there exists a clear binary encoding for elements of the cyclic group.

\appendix
\section{Proofs}\label{appendix}

\subsection*{Proof of Lemma~\ref{lem:colors}}
Let $F_3\subseteq\Gamma$, $i\in\{1,2\}$ and let $g,h\in(\mathbb{Z}/2\mathbb{Z})^n$ denote the $i$-th row of $G$ and $H$, respectively. First, we prove our claim for $k\notin F_3$. Assume $|F_3\setminus F_i|$ is even. Then, $\sum_{j\in F_3\setminus F_i}(g_{j}+h_{j})=0$. This means, $\sum_{j\in F_3\setminus F_i}g_{j}$ and $\sum_{j\in F_3\setminus F_i}h_{j}$ are both even or odd. Moreover, 
\begin{align*}
\sum_{j\in F_3\cap F_i}(g_{j}+h_{j})=2\sum_{j\in F_3\cap F_i}b_j^i=0.
\end{align*}
Hence, $\sum_{j\in F_3\cap F_i}g_{j}$ and $\sum_{j\in F_3\cap F_i}h_{j}$ are both even or odd. Thus, $\sum_{j\in F_3}g_{j}$ and $\sum_{j\in F_3}h_{j}$ are both even or odd. The converse follows from similar arguments by assuming $|F_3\setminus F_i|$ is odd. Now, consider our claim for $k\in F_3$ and $k\notin F_i$. If $|F_3\setminus F_i|$ is odd, then
\begin{align*}
\sum_{j\in F_3\setminus\{F_i\cup\{k\}\}}(g_j+h_j)=0
\end{align*}
With analogous arguments as before we conclude that $$\sum_{j\in F_3\setminus\{k\}}g_j=\sum_{j\in F_3\setminus\{k\}}h_j.$$ Hence, for $g_k=1$ we have $\sum_{j\in F_3}g_j\neq\sum_{j\in F_3\setminus\{k\}}h_j$ and for $g_k=0$ we have $\sum_{j\in F_3}g_j=\sum_{j\in F_3\setminus\{k\}}h_j$. Otherwise, $|F_3\setminus F_i|$ is even and the converse follows from similar arguments. For $k\in F_3$ and $k\in F_i$ analogous arguments prove our claim.

\subsection*{Proof of Proposition~\ref{prop::IG}}
\begin{description}
\item[Case 1:] $1\in F_1$, $b^1_1=0$ and $1\in F_2$, $b^2_1=0$. Then,
\begin{align*}
\psi_{\mathcal{N}}(q_G)&=\prod_{j=1}^na_{G_{\cdot j}}^j\left(\prod_{j=1}^{n-1}a_{\sum_{l=1}^jG_{\cdot l}}^{n+j}+\prod_{j=2}^na_{\sum_{l=2}^jG_{\cdot l}}^{n+j}\right)\\
&=a_{G_{\cdot 1}}^1\prod_{j=2}^na_{G_{\cdot j}}^j\left(a_{G_{\cdot 1}}^{n+1}\prod_{j=2}^{n-1}a_{\sum_{l=1}^jG_{\cdot l}}^{n+j}+a_{\sum_{l=2}^nG_{\cdot l}}^{2n}\prod_{j=2}^{n-1}a_{\sum_{l=2}^jG_{\cdot l}}^{n+j}\right)\\
&=a_{G_{\cdot 1}}^1\left(a_{G_{\cdot 1}}^{n+1}+a_{\sum_{l=2}^nG_{\cdot l}}^{2n}\right)\prod_{j=2}^na_{G_{\cdot j}}^j\prod_{j=2}^{n-1}a_{\sum_{l=1}^jG_{\cdot l}}^{n+j}
\end{align*}
Let $\psi_{\mathcal{T}}$ denote the parameterization of the undirected tree underlying $\mathcal{T}$ in Fourier coordinates~\cite{SS05}:
\begin{align*}
\psi_{\mathcal{T}}:\mathbb{C}\left[q_g\,:\,g\in\mathbb{G}^n,~\sum_{i=1}^ng_i=0\right]&\to\mathbb{C}\left[a_g^i\,:\,g\in\mathbb{G},~i\in\{1,\dots,2n\}\right]\\
q_g&\mapsto\prod_{j=1}^na_{g_j}^j\prod_{j=2}^{n-1}a_{\sum_{l=2}^jg_l}^{n+j}
\end{align*}

Since $G_{\cdot 1}=(0,0)$, we know that matrix $G'$ obtained from $G$ by deleting the first column lies in the domain of $\psi_{\mathcal{T}}$. Hence,
\begin{align*}
\psi_{\mathcal{N}}(q_G)=a_{(0,0)}^1\left(a_{(0,0)}^{n+1}+a_{(0,0)}^{2n}\right)\psi_{\mathcal{T}}(G').
\end{align*}
This means, $\psi_{\mathcal{N}}$ and $\psi_{\mathcal{T}}$ have the same kernel.
\item[Case 2:] $1\notin F_1$ or $1\notin F_2$. Let $\beta=G_{\cdot 1}$ and $\gamma =H_{\cdot 1}$. Then, $\beta\neq\gamma$ and, for 
\begin{align*}
f=\sum_{G\in L(n, F,b)}c_Gq_Gq_H\in\mathcal{G}(n, F,b)
\end{align*}
we have
\begin{align*}
&\psi_{\mathcal{N}}(f)\\
&=\sum_{G\in L(n, F,b)}c_G\prod_{j=1}^na_{G_{\cdot j}}^j\left(\prod_{j=1}^{n-1}a_{\sum_{l=1}^jG_{\cdot l}}^{n+j}+\prod_{j=2}^na_{\sum_{l=2}^jG_{\cdot l}}^{n+j}\right)\\
&~~~~~~~~~~~~~~~~~\times\prod_{j=1}^na_{H_{\cdot j}}^j\left(\prod_{j=1}^{n-1}a_{\sum_{l=1}^jH_{\cdot l}}^{n+j}+\prod_{j=2}^na_{\sum_{l=2}^jH_{\cdot l}}^{n+j}\right)\\
&=\sum_{G\in L(n, F,b)}c_G\,\left(\prod_{j=1}^na_{G_{\cdot j}}^ja_{H_{\cdot j}}^j\right)\times\\
&~~~~\left(a_{G_{\cdot 1}}^{n+1}a_{H_{\cdot 1}}^{n+1}\prod_{j=2}^{n-1}a_{\sum_{l=1}^jG_{\cdot l}}^{n+j}a_{\sum_{l=1}^jH_{\cdot l}}^{n+j}+a_{G_{\cdot 1}}^{n+1}a_{\sum_{l=2}^{n}H_{\cdot l}}^{2n}\prod_{j=2}^{n-1}a_{\sum_{l=1}^jG_{\cdot l}}^{n+j}a_{\sum_{l=2}^jH_{\cdot l}}^{n+j}\right.\\
&~~~~~~\left.+a_{\sum_{l=2}^nG_{\cdot l}}^{2n}a_{H_{\cdot 1}}^{n+1}\prod_{j=2}^{n-1}a_{\sum_{l=2}^jG_{\cdot l}}^{n+j}a_{\sum_{l=1}^jH_{\cdot l}}^{n+j}+a_{\sum_{l=2}^nG_{\cdot l}}^{2n}a_{\sum_{l=2}^{n}H_{\cdot l}}^{2n}\prod_{j=2}^{n-1}a_{\sum_{l=2}^jG_{\cdot l}}^{n+j}a_{\sum_{l=2}^jH_{\cdot l}}^{n+j}\right)
\end{align*}

For a fixed $G\in L(n, F,b)$, no pair of products can cancel out because $\beta\neq\gamma$. Moreover, for $j\in\Gamma$, the set of vectors $\{G_{\cdot j},H_{\cdot j}\}$ is fully determined. Hence, $c_G':=c_G\prod_{j=1}^na_{G_{\cdot j}}^ja_{H_{\cdot j}}^j$ is a constant that only depends on $G$. Therefore, $\psi_{\mathcal{N}}(f)=0$ if and only if
\begin{align}
\sum_{G\in L(n, F,b)}c_G'\left(\prod_{j=2}^{n-1}a_{\sum_{l=1}^jG_{\cdot l}}^{n+j}a_{\sum_{l=1}^jH_{\cdot l}}^{n+j}\right)&=0,\label{C2::1}\\
\sum_{G\in L(n, F,b)}c_G'\left(\prod_{j=2}^{n-1}a_{\sum_{l=1}^jG_{\cdot l}}^{n+j}a_{\sum_{l=2}^jH_{\cdot l}}^{n+j}\right)&=0,\label{C2::2}\\
\sum_{G\in L(n, F,b)}c_G'\left(\prod_{j=2}^{n-1}a_{\sum_{l=2}^jG_{\cdot l}}^{n+j}a_{\sum_{l=1}^jH_{\cdot l}}^{n+j}\right)&=0.\label{C2::3}\\
\sum_{G\in L(n, F,b)}c_G'\left(\prod_{j=2}^{n-1}a_{\sum_{l=2}^jG_{\cdot l}}^{n+j}a_{\sum_{l=2}^jH_{\cdot l}}^{n+j}\right)&=0,\label{C2::4}
\end{align}
From Lemma~\ref{lem:colors} we know that
\begin{align*}
e_{n+j}\in\mathbb{E}_{\mathcal{N}}~~~\Leftrightarrow~~~\sum_{l=1}^jH_{\cdot l}=\sum_{l=1}^jG_{\cdot l}.
\end{align*}
Hence, Equation~\eqref{C2::1} holds if and only if
\begin{align*}
\sum_{G\in L(n, F,b)}c_G'M_1^{F,b}\left(\mathbb{E}_{\mathcal{N}},\mathbb{O}_{\mathcal{N}},\mathbb{U}_{\mathcal{N}}\right)&=0.
\end{align*}
\begin{description}
\item[Case 2.1:] $1\notin F_1\cup F_2$ and $\beta =(0,0)$. Then, $\gamma=(1,1)$. Hence, Equation~\eqref{C2::4} holds if and only if
\begin{align*}
\sum_{G\in L(n, F,b)}c_G'M_4^{F,b}\left(\mathbb{O}_{\mathcal{N}},\mathbb{E}_{\mathcal{N}},\mathbb{U}_{\mathcal{N}}\right)&=0.
\end{align*}
From Lemma~\ref{lem:colors} we get
\begin{align*}
e_{n+j}\in\mathbb{O}_{\mathcal{N}}~~~&\Leftrightarrow~~~\sum_{l=1}^jG_{\cdot l}=\sum_{l=2}^jH_{\cdot l},~&e_{n+j}\in\mathbb{E}_{\mathcal{N}}~~~&\Leftrightarrow~~~\sum_{l=1}^jH_{\cdot l}=\sum_{l=2}^jG_{\cdot l}.
\end{align*}
Therefore, Equations~\eqref{C2::2} and~\eqref{C2::3} hold if and only if
\begin{align*}
\sum_{G\in L(n, F,b)}c_G'M_2^{F,b}\left(\mathbb{O}_{\mathcal{N}},\mathbb{E}_{\mathcal{N}},\mathbb{U}_{\mathcal{N}}\right)&=0,\\
\sum_{G\in L(n, F,b)}c_G'M_3^{F,b}\left(\mathbb{E}_{\mathcal{N}},\mathbb{O}_{\mathcal{N}},\mathbb{U}_{\mathcal{N}}\right)&=0.
\end{align*}
Since $G_{\cdot 1}=(0,0)$, we know that
\begin{align*}
\text{ker}\left(M_1^{F,b}\left(\mathbb{E}_{\mathcal{N}},\mathbb{O}_{\mathcal{N}},\mathbb{U}_{\mathcal{N}}\right)\right)&=\,\text{ker}\left(M_3^{F,b}\left(\mathbb{E}_{\mathcal{N}},\mathbb{O}_{\mathcal{N}},\mathbb{U}_{\mathcal{N}}\right)\right),\\
\text{ker}\left(M_2^{F,b}\left(\mathbb{O}_{\mathcal{N}},\mathbb{E}_{\mathcal{N}},\mathbb{U}_{\mathcal{N}}\right)\right)&=\,\text{ker}\left(M_4^{F,b}\left(\mathbb{O}_{\mathcal{N}},\mathbb{E}_{\mathcal{N}},\mathbb{U}_{\mathcal{N}}\right)\right).
\end{align*}
\item[Case 2.2:] $1\notin F_1\cup F_2$ and $\beta =(0,1)$. Then, $\gamma=(1,0)$. Hence, similar to Case~2.1 we deduce from Lemma~\ref{lem:colors} that Equations~\eqref{C2::2} to~\eqref{C2::4} hold if and only if
\begin{align*}
\sum_{G\in L(n, F,b)}c_G'M_2^{F,b}\left(\mathbb{U}_{\mathcal{N}}^2,\mathbb{U}_{\mathcal{N}}^1,\mathbb{E}_{\mathcal{N}}\cup\mathbb{O}_{\mathcal{N}}\right)&=0,\\
\sum_{G\in L(n, F,b)}c_G'M_3^{F,b}\left(\mathbb{U}_{\mathcal{N}}^1,\mathbb{U}_{\mathcal{N}}^2,\mathbb{E}_{\mathcal{N}}\cup\mathbb{O}_{\mathcal{N}}\right)&=0,\\
\sum_{G\in L(n, F,b)}c_G'M_4^{F,b}\left(\mathbb{O}_{\mathcal{N}},\mathbb{E}_{\mathcal{N}},\mathbb{U}_{\mathcal{N}}\right)&=0.
\end{align*}
\item[Case 2.3:] $1\notin F_1$, $1\in F_2$ and $\beta=(0,0)$. Then, $\gamma=(1,0)$. Hence, similar to Case~2.1 we deduce from Lemma~\ref{lem:colors} that Equations~\eqref{C2::2} to~\eqref{C2::4} hold if and only if
\begin{align*}
\sum_{G\in L(n, F,b)}c_G'M_i^{F,b}\left(\mathbb{U}_{\mathcal{N}}^2,\mathbb{U}_{\mathcal{N}}^1,\mathbb{E}_{\mathcal{N}}\cup\mathbb{O}_{\mathcal{N}}\right)&=0~&~&\forall\,i\in\{2,4\},\\
\sum_{G\in L(n, F,b)}c_G'M_3^{F,b}\left(\mathbb{E}_{\mathcal{N}},\mathbb{O}_{\mathcal{N}},\mathbb{U}_{\mathcal{N}}\right)&=0.
\end{align*}
Since $G_{\cdot 1}=(0,0)$, we know that
\begin{align*}
\text{ker}\left(M_1^{F,b}\left(\mathbb{E}_{\mathcal{N}},\mathbb{O}_{\mathcal{N}},\mathbb{U}_{\mathcal{N}}\right)\right)&=\,\text{ker}\left(M_3^{F,b}\left(\mathbb{E}_{\mathcal{N}},\mathbb{O}_{\mathcal{N}},\mathbb{U}_{\mathcal{N}}\right)\right),\\
\text{ker}\left(M_2^{F,b}\left(\mathbb{U}_{\mathcal{N}}^2,\mathbb{U}_{\mathcal{N}}^1,\mathbb{E}_{\mathcal{N}}\cup\mathbb{O}_{\mathcal{N}}\right)\right)&=\,\text{ker}\left(M_4^{F,b}\left(\mathbb{U}_{\mathcal{N}}^2,\mathbb{U}_{\mathcal{N}}^1,\mathbb{E}_{\mathcal{N}}\cup\mathbb{O}_{\mathcal{N}}\right)\right).
\end{align*}
\item[Case 2.4:] $1\notin F_1$, $1\in F_2$ and $\beta=(0,1)$. Then, $\gamma=(1,1)$. Hence, similar to Case~2.1 we deduce from Lemma~\ref{lem:colors} that Equations~\eqref{C2::2} to~\eqref{C2::4} hold if and only if
\begin{align*}
\sum_{G\in L(n, F,b)}c_G'M_2^{F,b}\left(\mathbb{O}_{\mathcal{N}},\mathbb{E}_{\mathcal{N}},\mathbb{U}_{\mathcal{N}}\right)&=0,\\
\sum_{G\in L(n, F,b)}c_G'M_3^{F,b}\left(\mathbb{U}_{\mathcal{N}}^1,\mathbb{U}_{\mathcal{N}}^2,\mathbb{E}_{\mathcal{N}}\cup\mathbb{O}_{\mathcal{N}}\right)&=0,\\
\sum_{G\in L(n, F,b)}c_G'M_4^{F,b}\left(\mathbb{U}_{\mathcal{N}}^2,\mathbb{U}_{\mathcal{N}}^1,\mathbb{E}_{\mathcal{N}}\cup\mathbb{O}_{\mathcal{N}}\right)&=0.
\end{align*}
\item[Case 2.5:] $1\in F_1$, $1\notin F_2$ and $\beta=(0,0)$. Then, $\gamma=(0,1)$. Analogous to Case~2.3.
\item[Case 2.6:] $1\in F_1$, $1\notin F_2$ and $\beta=(1,0)$. Then, $\gamma=(1,1)$. Analogous to Case~2.4.
\end{description}

\item[Case 3:] $1\in F_1$, $b^1_1\neq 0$ and $1\in F_2$, $b^2_1\neq 0$. Let $\beta =(b_1^1,b_1^2)$. Then, similar to Case~2,
\begin{align*}
\psi_{\mathcal{N}}(f)&=\sum_{G\in L(n, F,b)}c_G\,\left(\prod_{j=1}^na_{G_{\cdot j}}^ja_{H_{\cdot j}}^j\right)\times\\
&~~~~\left(a_{\beta}^{n+1}a_{\beta}^{n+1}\prod_{j=2}^{n-1}a_{\sum_{l=1}^jG_{\cdot l}}^{n+j}a_{\sum_{l=1}^jH_{\cdot l}}^{n+j}+a_{\beta}^{n+1}a_{\beta}^{2n}\prod_{j=2}^{n-1}a_{\sum_{l=1}^jG_{\cdot l}}^{n+j}a_{\sum_{l=2}^jH_{\cdot l}}^{n+j}\right.\\
&~~~~~~\left.+a_{\beta}^{2n}a_{\beta}^{n+1}\prod_{j=2}^{n-1}a_{\sum_{l=2}^jG_{\cdot l}}^{n+j}a_{\sum_{l=1}^jH_{\cdot l}}^{n+j}+a_{\beta}^{2n}a_{\beta}^{2n}\prod_{j=2}^{n-1}a_{\sum_{l=2}^jG_{\cdot l}}^{n+j}a_{\sum_{l=2}^jH_{\cdot l}}^{n+j}\right)
\end{align*}
Since factors $a_{\beta}^{n+1}$ and $a_{\beta}^{2n}$ are distinct, all pairs of products in our sum can not cancel out except for possibly the second and third. In addition, for $j\in\mathbb{E}_{\mathcal{N}}\cup\mathbb{U}_{\mathcal{N}}$, we know that matrices
\begin{align*}
\left[\sum_{l=1}^jG_{\cdot l},\sum_{l=2}^jH_{\cdot l}\right]~~~\text{and}~~~\left[\sum_{l=2}^jG_{\cdot l},\sum_{l=1}^jH_{\cdot l}\right]
\end{align*}
are distinct under column permutations because $G_{\cdot 1}=H_{\cdot 1}\neq(0,0)$. If instead $\mathbb{E}_{\mathcal{N}}\cup\mathbb{U}_{\mathcal{N}}=\emptyset$, then
\begin{align*}
\sum_{l=1}^jG_{il}&\neq\sum_{l=1}^jH_{il}~&~&\forall\,i\in\{1,2\},j\in\{2,\dots,n-1\}.
\end{align*}
Hence, the second and third product in our sum can not cancel out. Therefore, like in Case~2, $\psi_{\mathcal{N}}(f)=0$ if and only if Equations~\eqref{C2::1} to~\eqref{C2::4} hold. Observe that Equations~\eqref{C2::1} and~\eqref{C2::4} as well as Equations~\eqref{C2::2} and~\eqref{C2::3} are equivalent because they only differ by an addition of $\beta$ to all subscripts of the Fourier parameters. 
Thus, similar to Case~2, our claim follows from Lemma~\ref{lem:colors}.
\end{description}

%\section*{Acknowledgments}

\bibliographystyle{siamplain}
\bibliography{manuscript}

\end{document}